\newcommand\version{November 19, 2013}
\newtheorem{theorem}{Theorem}
\newtheorem{proposition}[theorem]{Proposition}
\newtheorem{lemma}[theorem]{Lemma}
\newtheorem{corollary}[theorem]{Corollary}
\theoremstyle{definition}
\theoremstyle{remark}
\renewcommand{\epsilon}{\varepsilon}
\renewcommand{\phi}{\varphi}
\newcommand{\R}{\mathbb{R}}
\DeclareMathOperator{\im}{Im}
\DeclareMathOperator{\re}{Re}
\DeclareMathOperator{\spec}{spec}
\begin{document}

\title[Polaron dynamics --- \version]{Dynamics of a strongly coupled polaron}

\author{Rupert L. Frank}
\address{Rupert L. Frank, Mathematics 253-37, Caltech, Pasadena, CA 91125, USA}
\email{rlfrank@math.princeton.edu}

\author{Benjamin Schlein}
\address{Benjamin Schlein, University of Bonn, Endenicher Allee 60, 53115 Bonn, Germany}
\email{benjamin.schlein@hcm.uni-bonn.de}

\begin{abstract}
We study the dynamics of large polarons described by the Fr\"ohlich Hamiltonian in the limit of strong coupling. The initial conditions are (perturbations of) product states of an electron wave function and a phonon coherent state, as suggested by Pekar. We show that, to leading order on the natural time scale of the problem, the phonon field is stationary and the electron moves according to an effective linear Schr\"odinger equation.
\end{abstract}

\maketitle

\renewcommand{\thefootnote}{${}$} \footnotetext{\copyright\, 2013 by the authors. This paper may be reproduced, in its entirety, for non-commercial purposes.\\
U.S. National Science Foundation grant PHY-1347399 (R.F.) and ERC Starting Grant MAQD-240518 (B.S.) are acknowledged.}

\section{Introduction and main result}

The polaron is a model for an electron interacting with the quantized optical modes of a polar crystal. A `large' (or `continuous') polaron is characterized by the fact that the spatial extension of this polaron is large compared to the spacing of the underlying lattice. It can be described, as derived by Fr\"ohlich \cite{Fr} in 1937, by the Hamiltonian
$$
H^{\text{F}}_\alpha = p^2 + \int_{\R^3} \frac{dk}{|k|} \left( e^{-ik\cdot x} a_k + e^{ik\cdot x} a_k^* \right) + \int_{\R^3} dk\, a_k^* a_k \,,
$$
acting in $L^2(\R^3)\otimes\mathcal F$. Here, $x$ and $p=-i\nabla_x$ are position and momentum of the electron, respectively, and $a_k^*$ and $a_k$ are creation and annihilation operators in the symmetric Fock space $\mathcal F$ over $L^2(\R^3)$, satisfying
\begin{equation}\label{eq:ccr}
[a_k,a_{k'}^*]=\alpha^{-2}\delta(k-k') \,,
\qquad [a_k,a_{k'}]=[a_k^*,a_{k'}^*]=0
\qquad\text{for all}\ k,k'\in\R^3 \,.
\end{equation}
Note the $\alpha$ dependence in the commutation relations. We have written the Hamiltonian in strong coupling units, which will be convenient for us. In the appendix we explain the change of variables and relate it to the more standard form of this Hamiltonian. In Section \ref{sec:prelim} we also discuss the precise definition of this Hamiltonian and its lower boundedness.

Through the commutation relations, the Hamiltonian $H^{\text{F}}_\alpha$ depends on a single non-negative parameter $\alpha>0$, and we are interested in the so-called `strong coupling regime' $\alpha\to\infty$. The ground state energy
$$
E^{\text{F}}_\alpha = \inf\spec H^{\text{F}}_\alpha
$$
has been studied extensively. While its behavior for small $\alpha$ was understood completely by the middle of the 1950s \cite{LePi,LeLoPi,Gu,Fe,LiYa} the strong coupling regime remained open for quite some time. Pekar \cite{Pe0,Pe} had produced an upper bound on $E^{\text{F}}_\alpha$ by using a trial state of the product form
\begin{equation}
\label{eq:pekar}
\Psi = \psi \otimes W(\alpha^2 \phi)\Omega \,,
\end{equation}
where $\psi\in H^1(\R^3)$ is the wave function of an electron and $W(\alpha^2 \phi)\Omega$ is a coherent state corresponding to a phonon field $\phi\in L^2(\R^3)$. More formally, $\Omega$ is the vacuum in $\mathcal F$ and $W(f)$ is the Weyl operator,
$$
W(f) = \exp( a^*(f) - a(f) ) \,.
$$
For each $f\in L^2(\R^3)$, $W(f)$ is a unitary operator in $\mathcal F$. The property of these operators that will be important for us is that
\begin{equation}
\label{eq:weyl}
W^*(f)a_k W(f) = a_k + \alpha^{-2} f(k) 
\qquad\text{and}\qquad
W^*(f)a_k^* W(f) = a_k^* + \alpha^{-2} \overline{f(k)} \,. 
\end{equation}
In particular, coherent states are eigenstates of annihilation operators,
\begin{equation}
\label{eq:coherent}
a_k W(f)\Omega = \alpha^{-2} f(k) W(f)\Omega \,.
\end{equation}

The $\alpha$ enters in \eqref{eq:pekar} so that for fixed $\psi$ and $\phi$, the expected energy is bounded (indeed, constant) with respect to $\alpha$. To see this, we compute using \eqref{eq:coherent}
\begin{equation}
\label{eq:pekarham}
\left\langle \psi \otimes W(\alpha^2 \phi)\Omega, H_\alpha^{\text{F}} \left(\psi \otimes W(\alpha^2 \phi)\Omega \right)\right \rangle_{L^2(\R^3)\otimes\mathcal F}
= \left\langle \psi, H_\phi \psi \right\rangle_{L^2(\R^3)}
\end{equation}
with the effective Schr\"odinger operator
\begin{equation}
\label{eq:hameff}
H_\phi = p^2 + V_\phi(x) + \|\phi\|^2
\end{equation}
acting in $L^2(\R^3)$. Here,
$$
V_\phi(x) = \int_{\R^3} \frac{dk}{|k|} \left( e^{-ik\cdot x} \phi(k) + e^{ik\cdot x} \overline{\phi(k)} \right) = 2^{3/2} \pi^{-1/2} \re \int_{\R^3} \frac{dx}{|x-x'|^2} \check\phi(x')
$$
and
$$
\check \phi(x) = (2\pi)^{-3/2} \int_{\R^3} dk\, e^{-ik\cdot x} \phi(k) \,.
$$

By minimizing \eqref{eq:pekarham} over all $\psi$ and $\phi$, Pekar obtained an upper bound on $E^{\text{F}}_\alpha$ which he expected to be asymptotically correct as $\alpha\to\infty$. A mathematically rigorous proof of this fact was only achieved in 1983 by Donsker and Varadhan \cite{DoVa} using large deviation theory; for an alternative proof, using operator theory, see \cite{LiTh}.

While the ground state energy $E^{\text{F}}_\alpha$ has be studied extensively, we are not aware of any rigorous study of the dynamics 
$
e^{iH^{\text{F}}_\alpha}\Psi \,.
$
This is our concern here. More precisely, we are interested in the dynamics in the strong coupling limit $\alpha\to\infty$ for initial date $\Psi$ of the product form \eqref{eq:pekar} suggested by Pekar. Here is a special case of our main result.

\begin{theorem}\label{mainintro}
Let $\phi\in L^2(\R^3)$ and $\alpha_0>0$. Then for all $\psi\in H^1(\R^3)$, all $\alpha\geq \alpha_0$ and all $t\in\R$, 
$$
\left\| e^{-iH^{\text{F}}_\alpha t}\left(\psi\otimes W(\alpha^2\phi)\Omega\right) - \left(e^{-iH_\phi t}\psi\right) \otimes W(\alpha^2\phi)\Omega \right\|^2 \leq 2 \alpha^{-2} \|\psi\|_{H^1(\R^3)}^2 \left(e^{C|t|}-1\right) \,,
$$
where $C$ depends only on $\alpha_0$ and an upper bound on $\|\phi\|$.
\end{theorem}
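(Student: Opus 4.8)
\section*{Proof proposal}

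The plan is to strip off the coherent phonon cloud by a Weyl conjugation and then run a Gr\"onwall estimate on the resulting fluctuation. Since $W(\alpha^2\phi)$ is unitary, the quantity in the theorem equals $\|e^{-i\widetilde H_\alpha t}(\psi\otimes\Omega)-(e^{-iH_\phi t}\psi)\otimes\Omega\|$, where $\widetilde H_\alpha:=W^*(\alpha^2\phi)H^{\text F}_\alpha W(\alpha^2\phi)$. Applying \eqref{eq:weyl} with $f=\alpha^2\phi$ (so that $a_k\mapsto a_k+\phi(k)$) and grouping terms, a direct computation gives
$$
\widetilde H_\alpha = H_\phi + N + \Phi(x),\qquad N:=\int_{\R^3}a_k^*a_k\,dk,
$$
where $\Phi(x)$ is the part linear in $a,a^*$, namely the Fr\"ohlich coupling $\int|k|^{-1}(e^{-ik\cdot x}a_k+e^{ik\cdot x}a_k^*)\,dk$ together with $a(\phi)+a^*(\phi)$. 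Because $N\Omega=0$, the comparison vector is $\Phi_t:=(e^{-iH_\phi t}\psi)\otimes\Omega=e^{-i(H_\phi+N)t}(\psi\otimes\Omega)$. Writing $D_t:=e^{-i\widetilde H_\alpha t}(\psi\otimes\Omega)-\Phi_t$, the two generators differ only by the self-adjoint $\Phi(x)$, so the unitary part and the real diagonal term drop and one is left with the basic identity
$$
\frac{d}{dt}\|D_t\|^2 = 2\,\im\langle D_t,\Phi(x)\Phi_t\rangle .
$$

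I would then estimate $\langle D_t,\Phi(x)\Phi_t\rangle$. As $\Phi_t$ lies in the zero-phonon sector, only the creation part of $\Phi(x)$ contributes, producing a one-phonon state; the quantitative gain is the factor $\alpha^{-2}$ in \eqref{eq:ccr}, giving $\|a^*(g)\Omega\|=\alpha^{-1}\|g\|$, which is the source of the $\alpha^{-2}$ in the final bound. The $a(\phi)+a^*(\phi)$ contribution is harmless, bounded by $\alpha^{-1}\|\phi\|\,\|\psi\|\,\|D_t\|$. The Fr\"ohlich coupling is the difficulty, since $|k|^{-1}\notin L^2(\R^3)$ and $\Phi(x)\Phi_t$ has infinite norm. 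To tame the ultraviolet behaviour I would use the Lieb--Yamazaki commutator trick: for large $|k|$, $e^{ik\cdot x}=|k|^{-2}[k\cdot p,e^{ik\cdot x}]$ permits an integration by parts in $x$ that replaces the form factor $|k|^{-1}$ by the square-integrable $k|k|^{-3}$ at the cost of one electron momentum $p$. Splitting $|k|\le1$ (where $|k|^{-1}\mathbf 1_{|k|\le1}\in L^2$) from $|k|>1$, and using conservation of $\langle\psi_t,H_\phi\psi_t\rangle$ together with the relative boundedness of $V_\phi$ to control $\|\psi_t\|_{H^1}$ uniformly in $t$ by $\|\psi\|_{H^1}$, every contribution in which the gained $p$ falls on $\psi_t$ is bounded by $C\alpha^{-1}\|\psi\|_{H^1}\|D_t\|$. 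Together with the $a(\phi)$ term, Cauchy--Schwarz in the form $2ab\le a^2+b^2$ turns these into $\tfrac{d}{dt}\|D_t\|^2\le C\|D_t\|^2+C\alpha^{-2}\|\psi\|_{H^1}^2$, which by Gr\"onwall would give exactly $\|D_t\|^2\le C\alpha^{-2}\|\psi\|_{H^1}^2(e^{C|t|}-1)$.

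The hard part is the single remaining contribution, in which the Lieb--Yamazaki momentum falls on the fluctuation rather than on $\psi_t$: it has the form $\langle p\,D_t,\,B^*\Phi_t\rangle$ with $B^*$ a creation operator with $L^2$ form factor, and is controlled by $C\alpha^{-1}\|\psi\|\,\|p\,D_t^{(1)}\|$, where $D_t^{(1)}$ is the one-phonon component of $D_t$; integrating by parts only reproduces the original expression, so this factor is intrinsic and cannot be removed by manipulation. The crude a priori bound $\|p\,D_t^{(1)}\|\le\|p\,D_t\|\le C$ — following from conservation of $\langle\widetilde H_\alpha\rangle$ and the coercivity $\widetilde H_\alpha\ge\tfrac12 p^2+\tfrac12 N-C$, itself a consequence of Lieb--Yamazaki — only yields an error of order $\alpha^{-1}$, not $\alpha^{-2}$. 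To recover the sharp power one must show that the electron kinetic energy carried by the residual phonons is itself of order $\alpha^{-1}$, i.e. bound $\|N^{1/2}p\,D_t\|=\langle\Psi_t,p^2N\,\Psi_t\rangle^{1/2}$ (equivalently $\alpha^{-1}\|p\,D_t^{(1)}\|$) by a multiple of $\alpha^{-1}$. I would seek this from an auxiliary estimate on the residual phonon number: since $[\widetilde H_\alpha,N]=\alpha^{-2}(A-A^*+a(\phi)-a^*(\phi))$ carries an \emph{explicit} factor $\alpha^{-2}$ (only the terms linear in $a,a^*$ fail to commute with $N$), the observable $\langle N\rangle_{\Psi_t}$ starts at $0$ and grows slowly, and a kinetic refinement of this bound, fed back into $\langle p\,D_t,B^*\Phi_t\rangle$, should restore the $\alpha^{-2}$. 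Making this quantitative — propagating a kinetic-energy-weighted phonon observable uniformly in $\alpha$ without losing electron derivatives through the badly singular commutator $[\Phi(x),p^2]$ — is the technically delicate heart of the argument, and is where I expect the main work to lie.
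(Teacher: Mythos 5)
Your reduction --- Weyl conjugation, Gronwall on $\|D_t\|^2$, and the Lieb--Yamazaki splitting of the ultraviolet part of the coupling --- coincides with the paper's, and you have correctly isolated the one term that resists a pointwise-in-$t$ estimate: the piece of $[p^2,a^*(e^{ik\cdot x}|k|^{-3}k\,\chi_{\{|k|>\Lambda\}})]$ in which the gained momentum lands on the fluctuation $D_t$ rather than on $\psi_t$, which by Cauchy--Schwarz and the crude bound $\|p\,D_t\|\le C$ yields only $O(\alpha^{-1})$. But your proposed repair --- propagating the kinetic-energy-weighted phonon number $\langle p^2\mathcal N\rangle$ along the \emph{interacting} dynamics --- is not carried out, and you yourself name the obstruction: $[\Phi(x),p^2]$ is more singular than $\Phi(x)$ itself, so the Heisenberg derivative of $p^2\mathcal N$ is not obviously controllable uniformly in $\alpha$. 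As it stands this is a genuine gap: the estimate you need, $\|\mathcal N^{1/2}p\,e^{-iHt}\Psi\|\lesssim\alpha^{-1}$ on bounded time intervals, is exactly the hard part and is neither proved nor reduced to anything established.

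The paper closes this gap by a different device that avoids any refined propagation estimate along the interacting flow. It keeps the commutator term $g(t)=2\im\langle D_t,[H_\phi,a^*(e^{ik\cdot x}|k|^{-3}\chi_{\{|k|>\Lambda\}})]e^{-iH_\phi t}\Psi\rangle$ intact, observes that $e^{iH_\phi t}[H_\phi,a^*(\cdot)]e^{-iH_\phi t}$ is a total time derivative, and integrates by parts in $t$ over $[0,T]$ (combined with one application of Duhamel's formula for $D_t$). After this manipulation every resulting term contains only the creation operator with the square-integrable form factor $|k|^{-3}\chi_{\{|k|>\Lambda\}}$ applied to vectors of the form $e^{-iH_\phi s}\Psi$; since $\mathcal N$ commutes with $H_\phi$, the hypothesis $\|(p^2+1)^{1/2}\mathcal N\Psi\|\le M\alpha^{-2}$ survives the free evolution and delivers $\int_0^T g(t)\,dt\le CM^2\alpha^{-2}T$. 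This is precisely why Proposition \ref{mainprop} splits the derivative of $\|D_t\|^2$ into an $f$-piece (estimated pointwise and fed into Gronwall) and a $g$-piece (estimated only after time integration): the term you flagged is the $g$-piece, and the missing power of $\alpha^{-1}$ is recovered from oscillation in time under $e^{-iH_\phi t}$, not from an improved a priori bound on $e^{-iHt}\Psi$. If you want to complete your argument, replacing your proposed propagation estimate by this time integration by parts is the step to supply.
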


In other words, the evolution of a Pekar product state \eqref{eq:pekar} can be approximated by dynamics of the electron wave function $\psi$ \emph{only}, and this evolution is described by the 
Schr\"odinger operator $H_\phi$ in \eqref{eq:hameff} with the effective potential $V_\phi$ determined by $\phi$. The coherent state describing the phonon field is stationary. This approximation is valid for times $|t|\leq o(\ln\alpha)$ and, in particular, for times of order one. Our main result, Theorem \ref{main2}, states that this approximation is also valid for certain initial states close to $\psi\otimes W(\alpha^2\phi)\Omega$ in an appropriate sense.

In our opinion this result is not unsurprising, since in the physics literature the motion of a strongly coupled polaron is typically described by the \emph{non-linear} system of equations
$$
i\partial_t \psi = \left(-\Delta +V\right) \psi \,,
\qquad
\left( c^{-2} \partial_t^2 + 1 \right)\Delta V = 4\pi |\psi|^2 \,;
$$
see, for instance, \cite{LaPe,BeNiRuSo,DeAl}. Our main result corresponds, in some sense, to the case $c=0$. We leave it as an \emph{open problem} to find a regime in which $c>0$.

Let us now state a more general version of Theorem \ref{mainintro} which also allows deviations from an exact product structure. To formulate our assumptions on the initial state we introduce the number of particles operator
\begin{equation}
\label{eq:nop}
\mathcal N = \int_{\R^3} dk\, a_k^* a_k
\end{equation}
acting in $\mathcal F$. Note that, if $\xi=(\xi^{(0)},\xi^{(1)},\ldots)\in\mathcal F$, then
$$
\langle\xi, \mathcal N \xi\rangle = \alpha^{-2} \sum_{n=1}^\infty n \|\xi^{(n)}\|^2 \,.
$$
The factor $\alpha^{-2}$ on the right side comes from the $\alpha$-dependence of the canonical commutation relations.

Our main result reads as follows.

\begin{theorem}\label{main2}
Let $\phi\in L^2(\R^3)$ and $\alpha_0>0$. Assume that $\Psi\in L^2(\R^3)\otimes\mathcal F$ satisfies
\begin{equation}
\label{eq:apriori}
\|(p^2+\mathcal N+1)^{1/2}\Psi\|\leq M\,,\quad
\|(p^2+1)^{1/2}\mathcal N\Psi\|\leq M\alpha^{-2} \,.
\end{equation}
Then for all $\alpha\geq\alpha_0$ and all $t\in\R$, 
$$
\left\| e^{-iH^{\text{F}}_\alpha t}W(\alpha^2\phi)\Psi - e^{-iH_\phi t}W(\alpha^2\phi) \Psi\right\|^2 \leq 2 M^2 \alpha^{-2} \left(e^{C|t|}-1\right) \,,
$$
where $C$ depends only on $\alpha_0$ and an upper bound on $\|\phi\|$.
\end{theorem}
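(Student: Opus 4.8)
The plan is to remove the coherent state by conjugating with the Weyl operator and then to run a Gr\"onwall estimate for the resulting difference. Since $W(\alpha^2\phi)$ acts only on $\mathcal F$ while $H_\phi$ acts only on $L^2(\R^3)$, the two commute, so $e^{-iH_\phi t}W(\alpha^2\phi)\Psi=W(\alpha^2\phi)e^{-iH_\phi t}\Psi$, and after multiplying by the unitary $W^*(\alpha^2\phi)$ it suffices to compare $e^{-i\tilde H t}\Psi$ with $e^{-iH_\phi t}\Psi$, where $\tilde H=W^*(\alpha^2\phi)H^{\text F}_\alpha W(\alpha^2\phi)$. Using \eqref{eq:weyl} with $f=\alpha^2\phi$ (so that the shift is $a_k\mapsto a_k+\phi(k)$), a direct computation gives $\tilde H=H_\phi+R$ with
$$
R=\int_{\R^3}\frac{dk}{|k|}\left(e^{-ik\cdot x}a_k+e^{ik\cdot x}a_k^*\right)+\mathcal N+\int_{\R^3}dk\,\overline{\phi(k)}a_k+\int_{\R^3}dk\,\phi(k)a_k^* \,,
$$
the constant $\|\phi\|^2$ and the potential $V_\phi$ being exactly the terms that reassemble $H_\phi$ out of $p^2$, the interaction and the number operator.

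Setting $\Psi_t=e^{-iH_\phi t}\Psi$ and $D_t=e^{-i\tilde H t}\Psi-\Psi_t$, I would prove the bound by a differential inequality. Since $\tilde H$ is self-adjoint, differentiating and discarding the purely imaginary diagonal term yields
$$
\frac{d}{dt}\|D_t\|^2=2\im\langle D_t,R\Psi_t\rangle \,;
$$
because $R$ is symmetric one also has $\im\langle\Psi_t,R\Psi_t\rangle=0$, so only the off-diagonal pairing survives. The strategy is then to bound $|\langle D_t,R\Psi_t\rangle|$ by $C(\|D_t\|^2+M^2\alpha^{-2})$, after which Gr\"onwall's lemma with $D_0=0$ gives the claim, the factor $2M^2$ coming out of the source term.

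To estimate the pairing two ingredients are needed. First, the a priori bounds \eqref{eq:apriori} must be transported along the effective flow: $\mathcal N$ commutes with $H_\phi$, so $\|\mathcal N\Psi_t\|=\|\mathcal N\Psi\|\le M\alpha^{-2}$ and $\langle\Psi_t,\mathcal N\Psi_t\rangle=\langle\Psi,\mathcal N\Psi\rangle\le M^2\alpha^{-2}$, while the kinetic energy is controlled uniformly in $t$ by energy conservation for $H_\phi$ together with the fact that $V_\phi$ is infinitesimally form-bounded relative to $p^2$ (so that $p^2+1$ and $H_\phi+\const$ are comparable). This makes the soft contributions small: the two terms $\int dk\,\phi(k)a_k^*\,\Psi_t$ and $\int dk\,\overline{\phi(k)}a_k\,\Psi_t$ are $O(\|\phi\|M\alpha^{-1})$ (the creation term also picking up the $\alpha^{-2}\|\phi\|^2$ from the commutator), and $\|\mathcal N\Psi_t\|=O(M\alpha^{-2})$, so their pairings with $D_t$ are $\le\tfrac12\|D_t\|^2+O(M^2\alpha^{-2})$. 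Second, and this is the crux, one must handle the Fr\"ohlich interaction. Its creation part $\int\frac{dk}{|k|}e^{ik\cdot x}a_k^*$ does not map $L^2$-vectors into $L^2$, since the commutator produces the ultraviolet-divergent $\alpha^{-2}\int dk\,|k|^{-2}$; it must therefore never be applied directly but instead shifted onto $D_t$, where it becomes an annihilation operator, $\langle D_t,\int\frac{dk}{|k|}e^{ik\cdot x}a_k^*\Psi_t\rangle=\langle\int\frac{dk}{|k|}e^{-ik\cdot x}a_k\,D_t,\Psi_t\rangle$. Annihilation operators are then controlled by the Lieb--Yamazaki commutator identity, which trades the singular weight $|k|^{-1}$ for $|k|^{-2}(k\cdot p)$ and gives $\|\int\frac{dk}{|k|}e^{-ik\cdot x}a_k\,\eta\|\le C\|(p^2+1)^{1/2}(\mathcal N+\alpha^{-2})^{1/2}\eta\|$.

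The main obstacle is precisely this interaction term: to keep it at the order $\alpha^{-2}$ demanded by the theorem one cannot use $\|D_t\|$ alone, because the commutator identity leaves a factor of $p$ acting on $D_t$ (or, after a further integration by parts, on $\Psi_t$). I would close the argument by carrying along the interacting flow $e^{-i\tilde H t}$ the weighted quantities $\langle\,\cdot\,,(p^2+1)(\mathcal N+\alpha^{-2})\,\cdot\,\rangle$, which at $t=0$ are $O(M^2\alpha^{-2})$ by \eqref{eq:apriori}; since $\tilde H$ does not conserve the phonon number, the interaction feeds these back into themselves, so a subsidiary Gr\"onwall inequality controls them up to a factor growing like $e^{C|t|}$, and it is exactly this growth that produces the exponential in the final estimate and restricts validity to $|t|=o(\ln\alpha)$. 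Combining the soft bounds, the Lieb--Yamazaki estimate, and the propagated weighted bounds yields $\frac{d}{dt}\|D_t\|^2\le C(\|D_t\|^2+M^2\alpha^{-2})$ with $C$ depending only on $\alpha_0$ and an upper bound on $\|\phi\|$, and integrating from $D_0=0$ gives the stated inequality.
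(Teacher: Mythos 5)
Your reduction via the Weyl conjugation, the identification of $R=H-H_\phi$, the Gr\"onwall set-up, and the bounds on the soft terms $\mathcal N$, $a(\phi)$, $a^*(\phi)$ and on the annihilation part of the interaction all match the paper. The gap is in the creation part of the interaction, in effect $B^*=\int_{|k|>\Lambda}\frac{dk}{|k|}e^{ik\cdot x}a_k^*$ (the low-frequency part is harmless). Moving it onto $D_t$ and applying Cauchy--Schwarz cannot produce the claimed $\alpha^{-2}$: you estimate $|\langle B D_t,\Psi_t\rangle|\le\|BD_t\|\,\|\Psi_t\|$, but $\|\Psi_t\|=\|\Psi\|\le M$ carries no smallness, while $\|BD_t\|\le C\|(p^2+1)^{1/2}\mathcal N^{1/2}D_t\|$ is at best $O(M\alpha^{-1})$ --- already the free part $e^{-iH_\phi t}\Psi$ of $D_t$ contributes this much by \eqref{eq:apriori1}, and your subsidiary Gr\"onwall along $e^{-iHt}$ can do no better. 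The product is therefore $O(M^2\alpha^{-1})$ per unit time, one full power of $\alpha$ short of the theorem, and it comes with no factor of $\|D_t\|$, so it cannot be absorbed into $\tfrac12\|D_t\|^2+O(M^2\alpha^{-2})$. Distributing the weights differently (e.g.\ pairing $(p^2+1)^{1/2}D_t$, of norm $O(M)$, against $(p^2+1)^{-1/2}B^*\Psi_t$, of norm $O(\|(\mathcal N+\alpha^{-2})^{1/2}\Psi_t\|)=O(M\alpha^{-1})$) gives the same $M\cdot M\alpha^{-1}$.

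The missing idea is that the dangerous part of $B^*$ oscillates in time. The paper splits $B^*=\bigl[H_\phi,a^*(e^{ik\cdot x}|k|^{-3}\chi_{\{|k|>\Lambda\}})\bigr]-2\int_{|k|>\Lambda}\frac{dk}{|k|^3}e^{ik\cdot x}\,k\cdot p\,a_k^*$. The second operator maps $\Psi_t$ into $L^2$ with norm $O(M\alpha^{-1})$ (Lemma \ref{cran} plus \eqref{eq:apriori1}), so its pairing with $D_t$ has the admissible form $CM\alpha^{-1}\|D_t\|$. The commutator term, conjugated by $e^{iH_\phi t}$, is $-i\frac{d}{dt}$ of a uniformly bounded family; its contribution $g(t)$ is therefore integrated by parts in time, and the resulting boundary terms and the terms carrying an extra factor $(H-H_\phi)e^{-iHs}\Psi$ are each genuinely $O(M^2\alpha^{-2})$, yielding $\int_0^T g\,dt\le CM^2\alpha^{-2}T$ as required in Proposition \ref{mainprop}. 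Without this (or some equivalent exploitation of the time oscillation) your argument proves at best a bound of order $M^2\alpha^{-1}e^{C|t|}$. A secondary point: the exponential in the theorem comes from the ordinary Gr\"onwall on $\|D_t\|^2$, not from growth of weighted norms along the interacting flow; the paper needs only the uniform-in-time Lemma \ref{energyconsh}, whereas your proposed propagation of $(p^2+1)(\mathcal N+\alpha^{-2})$ along $e^{-iHt}$ is both unproven (the commutator of $H$ with this weight again involves the singular interaction) and, as explained, insufficient to close the estimate at order $\alpha^{-2}$.
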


This implies, of course, Theorem \ref{mainintro} by taking $\Psi =\psi\otimes\Omega$. Since $\mathcal N\Omega=0$, the two conditions in \eqref{eq:apriori} are satisfied with $M=\|\psi\|_{H^1}$, provided $\psi\in H^1(\R^3)$.

There is nothing special about the constant $2$ in this theorem (or in Theorem \ref{mainintro}). It can be replaced by any constant greater than one.

We now describe the strategy of our proof. We first observe that, since $W(\alpha^2\phi)$ is unitary and commutes with $H_\phi$, we have
\begin{align*}
\left\| e^{-iH^{\text{F}}_\alpha t}W(\alpha^2\phi)\Psi - e^{-iH_\phi t}W(\alpha^2\phi) \Psi\right\|^2
& = \left\| W^*(\alpha^2\phi) e^{-iH^{\text{F}}_\alpha t}W(\alpha^2\phi)\Psi - e^{-iH_\phi t} \Psi\right\|^2 \\
& = \left\| e^{-iW^*(\alpha^2\phi) H^{\text{F}}_\alpha W(\alpha^2\phi) t} \Psi - e^{-iH_\phi t} \Psi\right\|^2 \,.
\end{align*}
Moreover, a short computation based on \eqref{eq:weyl}, shows that
\begin{align*}
W^*(\alpha^2\phi) H^{\text{F}}_\alpha W(\alpha^2\phi) & = H_\phi + \int_{\R^3} dk\, a_k^* a_k + a(\phi)+ a^*(\phi) + \int_{\R^3} \frac{dk}{|k|}\left( e^{-ik\cdot x} a_k + e^{ik\cdot x} a_k^* \right) \\
& =: H \,.
\end{align*}
(Here, for the sake of simplicity, we do not indicate the dependence of $H$ on $\alpha$ and $\phi$.) In Section \ref{sec:prelim} we shall show that $H$, and therefore $H^{\text{F}}_\alpha$ as well, are lower semi-bounded operators in $L^2(\R^3)\otimes\mathcal F$. Since $|k|^{-1}\not\in L^2(\R^3)$, this is not completely obvious.

These manipulations have reduced the proof of Theorem \ref{main2} to the proof of the bound
\begin{equation}
\label{eq:mainequiv}
\left\| e^{-iH t}\Psi - e^{-iH_\phi t} \Psi\right\|^2 \leq 2 M^2 \alpha^{-2} \left(e^{C|t|}-1\right)
\end{equation}
with $C$ depending only on $\alpha_0$ and an upper bound on $\|\phi\|$. We shall prove (\ref{eq:mainequiv}) using a Gronwall-type argument, as explained in Proposition \ref{mainprop}.


\section{Form boundedness and energy conservation}\label{sec:prelim}

\subsection{The operator $H_\phi$}

Let $\phi\in L^2(\R^3)$. We want to argue that the potential $\check{\phi}*|x|^{-2}$ is infinitesimally form-bounded with respect to the Laplacian. Indeed, by the Hardy--Littlewood--Sobolev inequality $\check{\phi}*|x|^{-2}\in L^6(\R^3)$ and therefore, by H\"older's inequality,
$$
\int_{\R^d} |\check{\phi}*|x|^{-2}| |\psi|^2 \,dx \leq \|\check{\phi}*|x|^{-2}\|_6 \|\psi\|_{12/5}^2 \leq \|\check{\phi}*|x|^{-2}\|_6 \|\psi\|_6^{1/2} \|\psi\|^{3/2} \,. 
$$
By Sobolev's inequality we conclude that there is a $C$ such that for every $\epsilon>0$,
$$
\int_{\R^d} |\check{\phi}*|x|^{-2}| |\psi|^2 \,dx \leq \epsilon(\psi,p^2\psi) + C\epsilon^{-1/3} \|\check{\phi}*|x|^{-2}\|_6^{4/3} \|\psi\|^2 \,.
$$
Thus, $\check{\phi}*|x|^{-2}$ is infinitesimally form-bounded with respect to $p^2$ and we have
\begin{equation}
\label{eq:formbddh0}
H_\phi \geq (1-\epsilon)p^2 - C_\epsilon
\quad\text{and}\quad
H_\phi \leq (1+\epsilon)p^2 + C_\epsilon
\end{equation}
with $C_\epsilon= \epsilon^{-1/3} \|\check{\phi}*|x|^{-2}\|_6^{4/3} +\|\phi\|^2$. These two bounds imply (almost) conservation of the kinetic energy.

\begin{lemma}\label{energyconsh0}
If $\phi\in L^2(\R^3)$, then
$$
\sup_{t\in\R} \| (p^2+1)^{1/2} e^{-iH_\phi t} (p^2+1)^{-1/2} \| <\infty \,.
$$
\end{lemma}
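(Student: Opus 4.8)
The plan is to derive the bound from the two-sided form comparison \eqref{eq:formbddh0} together with the fact that $e^{-iH_\phi t}$ commutes with the spectral calculus of $H_\phi$. Fixing $\epsilon=1/2$ and setting $b:=C_{1/2}+1$, the inequalities \eqref{eq:formbddh0} give, as quadratic forms on $H^1(\R^3)=Q(p^2)=Q(H_\phi)$,
$$
\tfrac12 (p^2+1) \leq H_\phi + b \leq c_2 (p^2+1)
$$
for a suitable constant $c_2>0$ (one may take $c_2=\max\{3/2,\,2C_{1/2}+1\}$). In particular $H_\phi+b$ is a strictly positive self-adjoint operator whose form domain coincides with that of $p^2+1$.

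From this comparison I would deduce that the operators
$$
A := (p^2+1)^{1/2}(H_\phi+b)^{-1/2}, \qquad B := (H_\phi+b)^{1/2}(p^2+1)^{-1/2}
$$
extend to bounded operators with $\|A\|^2\le 2$ and $\|B\|^2\le c_2$. Indeed, the lower bound gives $\|A v\|^2=\langle (H_\phi+b)^{-1/2}v,\,(p^2+1)(H_\phi+b)^{-1/2}v\rangle\le 2\,\|v\|^2$, and the upper bound similarly gives $\|Bv\|^2\le c_2\|v\|^2$. Next, since $e^{-iH_\phi t}$ commutes with $(H_\phi+b)^{\pm1/2}$ by the spectral theorem, I would factor
$$
(p^2+1)^{1/2}\, e^{-iH_\phi t}\,(p^2+1)^{-1/2} = A\, e^{-iH_\phi t}\, B \,.
$$
Because $e^{-iH_\phi t}$ is unitary, the operator norm of the right-hand side is at most $\|A\|\,\|B\|\le \sqrt{2c_2}$, a bound independent of $t$, and taking the supremum over $t$ finishes the proof.

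The only real care is in the functional-analytic bookkeeping rather than in any single estimate. One must check that the form domains of $p^2$ and of $H_\phi$ genuinely coincide, which is exactly the content of the infinitesimal form-boundedness established above (so that $H_\phi$ is defined as a self-adjoint operator via the KLMN theorem), and one must justify the factorization on a dense set of regular vectors before extending by continuity. The commutation $e^{-iH_\phi t}(H_\phi+b)^{1/2}=(H_\phi+b)^{1/2}e^{-iH_\phi t}$ is the rigorous form of the physical statement that the energy $\langle\psi_t,H_\phi\psi_t\rangle$ is exactly conserved along the flow; since $H_\phi+b$ is comparable to $p^2+1$, this conservation upgrades to uniform control of the $H^1$-norm. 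I do not expect any step to present a genuine obstacle.
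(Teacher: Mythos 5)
Your proposal is correct and follows essentially the same route as the paper: both arguments rest on the two-sided form comparison \eqref{eq:formbddh0} (lower bound to pass from $p^2+1$ to $H_\phi+\mathrm{const}$, exact conservation of the $H_\phi$-form under $e^{-iH_\phi t}$, upper bound to return to $p^2+1$). The paper phrases this as a chain of quadratic-form inequalities applied to $\|\,|p|e^{-iH_\phi t}\psi\|^2$, while you package the identical estimates as the operator factorization $A\,e^{-iH_\phi t}B$; the content is the same.
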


\begin{proof}
For $\psi\in H^1(\R^3)$, by \eqref{eq:formbddh0},
\begin{align*}
\||p| e^{-iH_\phi t} \psi \|^2 & \leq (1-\epsilon)^{-1} \|( H_\phi +C_\epsilon)^{1/2} e^{-iH_\phi t}\psi\|^2 = (1-\epsilon)^{-1} \|( H_\phi+C_\epsilon)^{1/2} \psi\|^2 \\
& \leq (1-\epsilon)^{-1} \|((1+\epsilon)p^2 + 2C_\epsilon)^{1/2} \psi\|^2
\end{align*}
This clearly implies the assertion.
\end{proof}


\subsection{Creation and annihilation operators}

In this section we consider operators of the form
$$
a(e^{ik\cdot x} f) = \int_{\R^3} dk\, e^{-ik\cdot x} \overline{f(k)} a_k
\quad\text{and}\quad
a^*(e^{ik\cdot x} f) = \int_{\R^3} dk\, e^{ik\cdot x} f(k) a_k^*
$$
acting in $L^2(\R^3)\otimes\mathcal F$, where $f$ is a given function in $L^2(\R^3)$. We shall show that these operators can be bounded in terms of the square root of the number of particles operator $\mathcal N$, see \eqref{eq:nop}. We have

\begin{lemma}\label{cran}
Let $f\in L^2(\R^3)$. Then
$$
\left\|a(e^{ik\cdot x} f)\Psi\right\| \leq \|f\| \left\|\mathcal N^{1/2} \Psi\right\|
\quad\text{and}\quad
\left\|a^*(e^{ik\cdot x} f)\Psi\right\| \leq \|f\| \left\|(\mathcal N+\alpha^{-2})^{1/2} \Psi\right\| \,.
$$
Moreover,
$$
\left\|\mathcal N^{1/2} a(e^{ik\cdot x}f)\Psi\right\| \leq \|f\| \left\|\left(\mathcal N\left(\mathcal N-\alpha^{-2}\right)\right)^{1/2}\Psi\right\|
$$
and
$$
\left\|\mathcal N^{1/2} a^*(e^{ik\cdot x}f)\Psi\right\| \leq \|f\| \left\|\left(\mathcal N+\alpha^{-2}\right)\Psi\right\| \,.
$$
\end{lemma}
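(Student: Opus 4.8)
The plan is to treat all four inequalities as standard $N_\tau$-type estimates, keeping in mind two features specific to this setup. First, the factor $e^{ik\cdot x}$ only contributes a phase, so that $\|e^{ik\cdot x}f\|=\|f\|$ and the electron variable $x$ plays no active role; all manipulations reduce to estimates on the Fock factor $\mathcal F$, applied pointwise in $x$. Second, the commutation relation \eqref{eq:ccr} carries an extra $\alpha^{-2}$ compared with the usual canonical relations, and this is precisely the factor appearing in the definition \eqref{eq:nop} of $\mathcal N$; keeping these powers of $\alpha$ matched throughout is the one point that requires genuine care.

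For the first bound I would estimate directly, writing $a(e^{ik\cdot x}f)\Psi=\int dk\,e^{-ik\cdot x}\overline{f(k)}\,a_k\Psi$ and applying the triangle inequality followed by Cauchy--Schwarz in the $k$-variable, so that
\[
\|a(e^{ik\cdot x}f)\Psi\|\leq\|f\|\left(\int dk\,\|a_k\Psi\|^2\right)^{1/2},
\]
where the remaining integral equals $\langle\Psi,\mathcal N\Psi\rangle=\|\mathcal N^{1/2}\Psi\|^2$ by \eqref{eq:nop}. The second bound then follows by passing to the adjoint: expanding $\|a^*(e^{ik\cdot x}f)\Psi\|^2=\langle\Psi,a(e^{ik\cdot x}f)a^*(e^{ik\cdot x}f)\Psi\rangle$ and using \eqref{eq:ccr} to commute, one gets the normal-ordered term bounded by $\|f\|^2\|\mathcal N^{1/2}\Psi\|^2$ and a contact term $\alpha^{-2}\|f\|^2\|\Psi\|^2$, whose sum is $\|f\|^2\langle\Psi,(\mathcal N+\alpha^{-2})\Psi\rangle$.

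For the two weighted bounds, naive Cauchy--Schwarz does not produce the sharp right-hand side, so I would instead decompose $\Psi$ along the $n$-particle sectors of $\mathcal F$ (the spatial factor being inert). Since $a(e^{ik\cdot x}f)$ lowers and $a^*(e^{ik\cdot x}f)$ raises the particle number by one, while $\mathcal N$ acts as the scalar $\alpha^{-2}n$ on the $n$-sector, applying the first two bounds sector by sector and then multiplying by the value of $\mathcal N$ on the \emph{image} sector reproduces the operators on the right: the annihilation case generates the weight $\alpha^{-4}n(n-1)$, i.e.\ $\mathcal N(\mathcal N-\alpha^{-2})$, and the creation case generates $\alpha^{-4}(n+1)^2$, i.e.\ $(\mathcal N+\alpha^{-2})^2$. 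Summing over $n$ gives the claimed inequalities. I expect no real obstacle beyond the bookkeeping just described; a clean way to eliminate it entirely is to set $b_k=\alpha a_k$, which restores the textbook relations $[b_k,b_{k'}^*]=\delta(k-k')$ and $\mathcal N=\alpha^{-2}\int dk\,b_k^*b_k$, derive the four bounds from the classical $N_\tau$-estimates for $b$ and $b^*$, and reinstate the powers of $\alpha$ at the end.
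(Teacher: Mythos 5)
Your proposal is correct and follows essentially the same route as the paper: the first bound is obtained identically (triangle inequality plus Cauchy--Schwarz, with $\int dk\,\|a_k\Psi\|^2=\langle\Psi,\mathcal N\Psi\rangle$), and your sector-by-sector bookkeeping for the two weighted bounds is exactly the explicit form of the intertwining relations $a(f)h(\mathcal N)=h(\mathcal N+\alpha^{-2})a(f)$ that the paper invokes. The only cosmetic difference is in the second bound, where the paper passes to operator norms via $\|a^*(e^{ik\cdot x}f)(\mathcal N+\alpha^{-2})^{-1/2}\|=\|a(e^{ik\cdot x}f)\mathcal N^{-1/2}\|\leq\|f\|$ instead of expanding $\|a^*\Psi\|^2$ with the commutator; both yield the same estimate.
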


The proof is well-known and elementary, but we include it for the sake of completeness.

\begin{proof}
The first inequality follows from
$$
\left\|a(e^{ik\cdot x} f)\Psi\right\| \leq \int_{\R^3}dk\, |f(k)| \|a_k \Psi\| \leq \|f\| \left\|\mathcal N^{1/2} \Psi\right\| \,.
$$
To prove the second one, we use the intertwining relations 
\begin{equation}\label{eq:inter} a(f) h(\mathcal{N}) = h(\mathcal{N} +\alpha^{-2}) a(f) \quad \text{ and } \quad a^* (f) h(\mathcal{N}+\alpha^{-2}) = h (\mathcal{N}) a^* (f) \,,
\end{equation}
which hold for any function $h: \alpha^{-2} \mathbb{N}_0 \to \alpha^{-2} \mathbb{N}_0$ and follow from the canonical 
commutation relations (\ref{eq:ccr}).  These relations (together with the first bound in the lemma) imply that \[ \| a^*(e^{ik\cdot x} f) (\mathcal{N} + \alpha^{-2})^{-1/2} \| = \|\mathcal{N}^{-1/2} a^* (e^{ik\cdot x}f) \| = \| a(e^{ik \cdot x} f) \, \mathcal{N}^{-1/2} \| \leq \|f \| \, . \]
The third and fourth bound follow from the first two and again from the intertwining relations (\ref{eq:inter}). 
\end{proof}

We shall need the following corollary later in our proof.

\begin{corollary}\label{pcr}
Let $(1+|k|)f\in L^2(\R^3)$. Then
$$
\left\|\left(p^2+1\right)^{1/2} \mathcal N^{1/2} a^*(e^{ik\cdot x} f) \Psi\right\| \leq C \left\|\left(1+|k|\right)f\right\| \left\|\left(p^2+1\right)^{1/2}\left(\mathcal N+\alpha^{-2}\right)\Psi\right\| \,.
$$ 
\end{corollary}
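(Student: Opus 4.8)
The plan is to reduce everything to the fourth bound of Lemma~\ref{cran}, the only genuinely new ingredient being the non-commutativity of $p$ with the $x$-dependent creation operator. Writing $\Phi = \mathcal N^{1/2} a^*(e^{ik\cdot x}f)\Psi$ and recalling that $p = -i\nabla_x$ acts only on the electron variable, I would first expand
$$
\left\|(p^2+1)^{1/2}\Phi\right\|^2 = \|\Phi\|^2 + \|p\,\Phi\|^2 ,
$$
so that it suffices to control $\|\Phi\|$ and $\|p\,\Phi\|$ separately. The norm term is immediate: the fourth bound of Lemma~\ref{cran} gives $\|\Phi\| \le \|f\|\,\|(\mathcal N+\alpha^{-2})\Psi\|$, and since $(p^2+1)^{1/2}\ge 1$ and $\|f\|\le \|(1+|k|)f\|$ this is already dominated by the right-hand side of the asserted inequality.

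For the gradient term the key computation is the commutator. Since $p_j\,e^{ik\cdot x} = e^{ik\cdot x}(p_j + k_j)$ as operators in $x$, and since $a_k^*$ acts only on $\mathcal F$, one obtains the operator identity
$$
p\, a^*(e^{ik\cdot x} f) = a^*(e^{ik\cdot x} f)\, p + a^*(e^{ik\cdot x}\, kf),
$$
where $a^*(e^{ik\cdot x}kf)$ denotes the vector whose $j$-th component has symbol $k_j f(k)$. Using in addition that $\mathcal N$ commutes with $p$, I would write
$$
p\,\Phi = \mathcal N^{1/2}\,a^*(e^{ik\cdot x}f)\,p\Psi + \mathcal N^{1/2}\,a^*(e^{ik\cdot x}kf)\,\Psi
$$
and apply the fourth bound of Lemma~\ref{cran} to each summand, with symbol $f$ and $kf$ respectively. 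This produces the two contributions $\|f\|\,\|(\mathcal N+\alpha^{-2})p\Psi\|$ and $\||k|f\|\,\|(\mathcal N+\alpha^{-2})\Psi\|$. Commuting $\mathcal N+\alpha^{-2}$ past $p$ and using $\|p\,\chi\|\le \|(p^2+1)^{1/2}\chi\|$ together with $1\le (p^2+1)^{1/2}$, both are bounded by $\|(1+|k|)f\|\,\|(p^2+1)^{1/2}(\mathcal N+\alpha^{-2})\Psi\|$. Combining the two pieces via $\|A+B\|^2\le 2\|A\|^2+2\|B\|^2$ and adding back the norm term yields the claim with $C=\sqrt 5$ (any admissible constant would do).

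The only point requiring care is the commutator identity, which is precisely the source of the weight $(1+|k|)$ rather than merely $\|f\|$ in the statement: differentiating the plane wave $e^{ik\cdot x}$ brings down the extra factor $k$, so that the gradient of the creation operator is governed by the \emph{weighted} $L^2$ norm of the symbol. All the manipulations above should first be carried out on a dense set (for instance states with finitely many particles and smooth, compactly supported components) and then extended by density and closure; I expect no genuine difficulty there, so the commutator bookkeeping is really the whole of the matter.
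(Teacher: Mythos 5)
Your proposal is correct and follows essentially the same route as the paper: split off $\|\Phi\|^2$ and $\|p\Phi\|^2$, use the commutator identity $[p_j,a^*(e^{ik\cdot x}f)]=a^*(e^{ik\cdot x}k_jf)$, and invoke the fourth bound of Lemma~\ref{cran} for the symbols $f$ and $k_jf$, which is exactly what produces the weight $(1+|k|)$. No gaps.
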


\begin{proof}
We write
\begin{align*}
\left\|\left(p^2+1\right)^{1/2} \mathcal N^{1/2} a^*(e^{ik\cdot x} f) \Psi\right\|^2 = \|\mathcal N^{1/2} a^*(e^{ik\cdot x} f)\Psi\|^2 + \sum_{j=1}^3 \left\|\mathcal N^{1/2} p_j a^*(e^{ik\cdot x}f)\Psi\right\|^2 \,.
\end{align*}
The bound for $\mathcal N^{1/2} a^*(e^{ik\cdot x} f)\Psi$ follows from the second part of Lemma \ref{cran}. To bound the remaining terms we observe that
\begin{align*}
p_j a^*(e^{ik\cdot x}f) = a^*(e^{ik\cdot x}f)p_j + [p_j, a^*(e^{ik\cdot x}f)] = a^*(e^{ik\cdot x}f)p_j + a^*(e^{ik\cdot x}k_jf) \,.
\end{align*}
Thus,
\begin{align*}
\left\|\mathcal N^{1/2} p_j a^*(e^{ik\cdot x}f)\Psi\right\|
& \leq \left\|\mathcal N^{1/2} a^*(e^{ik\cdot x}f) p_j \Psi\right\|
+ \left\|\mathcal N^{1/2} a^*(e^{ik\cdot x}k_j f)\Psi\right\|
\end{align*}
and the assertion follows again from the second part of Lemma \ref{cran}.
\end{proof}


\subsection{The operator $H$}

Our next goal is to prove that the operator $H$ is lower semi-bounded. Indeed, we shall show that $H$ differs form $p^2+\mathcal N$ by terms which are infinitesimally form bounded with respect to $p^2+\mathcal N$. We begin with

\begin{lemma}\label{interact}
If $f\in L^2(\R^3)$ and $\epsilon>0$, then
$$
a(e^{ikx} f)+a^*(e^{ikx}f) \leq \epsilon \mathcal N + \epsilon^{-1}\|f\|^2 \,.
$$
\end{lemma}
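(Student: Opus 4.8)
The plan is to reduce this operator inequality to the first bound of Lemma~\ref{cran} together with an elementary application of Young's inequality. Since the two operators $a(e^{ik\cdot x}f)$ and $a^*(e^{ik\cdot x}f)$ are formal adjoints of one another, the symmetric operator $a(e^{ik\cdot x}f)+a^*(e^{ik\cdot x}f)$ can be controlled at the level of quadratic forms. Thus I would fix $\Psi\in L^2(\R^3)\otimes\mathcal F$ in a suitable dense set and estimate the expectation $\langle\Psi,(a(e^{ik\cdot x}f)+a^*(e^{ik\cdot x}f))\Psi\rangle$.

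First I would rewrite this expectation as $2\re\langle a(e^{ik\cdot x}f)\Psi,\Psi\rangle$, using that $a^*(e^{ik\cdot x}f)=a(e^{ik\cdot x}f)^*$, and bound it from above by $2\|a(e^{ik\cdot x}f)\Psi\|\,\|\Psi\|$ by the Cauchy--Schwarz inequality. Next, the first inequality in Lemma~\ref{cran} gives $\|a(e^{ik\cdot x}f)\Psi\|\leq\|f\|\,\|\mathcal N^{1/2}\Psi\|$, so the expectation is at most $2\|f\|\,\|\mathcal N^{1/2}\Psi\|\,\|\Psi\|$.

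Finally I would apply the elementary inequality $2ab\leq\epsilon a^2+\epsilon^{-1}b^2$ with $a=\|\mathcal N^{1/2}\Psi\|$ and $b=\|f\|\,\|\Psi\|$, which yields the bound $\epsilon\|\mathcal N^{1/2}\Psi\|^2+\epsilon^{-1}\|f\|^2\|\Psi\|^2=\langle\Psi,(\epsilon\mathcal N+\epsilon^{-1}\|f\|^2)\Psi\rangle$. This is precisely the asserted form inequality.

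I do not expect a serious obstacle here: the argument is essentially a one-line Cauchy--Schwarz plus Young estimate once Lemma~\ref{cran} is available. The only point deserving a little care is to carry out the computation on an appropriate dense set of vectors $\Psi$ (on which $a(e^{ik\cdot x}f)\Psi$ is defined and $\|\mathcal N^{1/2}\Psi\|$ is finite), so that the form inequality is meaningful, and then to note that it extends to the full form domain by density.
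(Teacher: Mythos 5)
Your argument is correct. Writing the expectation as $2\re\langle a(e^{ik\cdot x}f)\Psi,\Psi\rangle$, applying Cauchy--Schwarz, the first bound of Lemma~\ref{cran}, and then $2ab\leq\epsilon a^2+\epsilon^{-1}b^2$ gives exactly the asserted form inequality, and your closing remark about working on a dense set and extending by density is the right way to make it rigorous. The paper takes a slightly different, self-contained route: it does not invoke Lemma~\ref{cran} at all, but completes the square at the operator level, observing that
$$
0 \leq \int_{\R^3} dk \left( \epsilon^{1/2} a_k^* - \epsilon^{-1/2}e^{-ik\cdot x} \overline{f(k)}\right) \left( \epsilon^{1/2} a_k - \epsilon^{-1/2}e^{ik\cdot x} f(k)\right) = \epsilon\mathcal N + \epsilon^{-1}\|f\|^2 - a^*(e^{ik\cdot x}f)-a(e^{ik\cdot x}f) \,,
$$
which is the desired inequality rearranged. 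The two proofs encode the same estimate: your Cauchy--Schwarz-plus-Young chain is what one obtains by expanding that square and bounding the cross term after integrating in $k$. What the paper's version buys is a one-line operator identity whose positivity is manifest, with no reliance on a previously proved lemma; what your version buys is transparency about where the quantitative input comes from, namely the a priori bound $\|a(e^{ik\cdot x}f)\Psi\|\leq\|f\|\,\|\mathcal N^{1/2}\Psi\|$. Either is perfectly acceptable here.
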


Clearly, replacing $f$ by $-f$, we also obtain
$$
a(e^{ikx} f)+a^*(e^{ikx}f) \geq -\epsilon \mathcal N - \epsilon^{-1}\|f\|^2 \,.
$$

\begin{proof}
We have
\begin{align*}
0 & \leq \int_{\R^3} dk \left( \epsilon^{1/2} a_k^* - \epsilon^{-1/2}e^{-ikx} \overline{f(k)}\right) \left( \epsilon^{1/2} a_k - \epsilon^{-1/2}e^{ikx} f(k)\right) \\
& = \epsilon\mathcal N + \epsilon^{-1}\|f\|^2 - a^*(e^{ikx}f)-a(e^{ikx}f) \,,
\end{align*}
which implies the assertion.
\end{proof}

The following lemma is considerably more involved. It allows one to deal with the non-$L^2$ tail of $|k|^{-1}$ and is essentially due to Lieb and Yamazaki \cite{LiYa}.

\begin{lemma}\label{interactly}
If $|k|^{-1}f\in L^2(\R^3)$ and $\epsilon>0$, then
$$
a(e^{ikx} f)+a^*(e^{ikx}f) \leq \epsilon p^2 + 2\epsilon^{-1}\||k|^{-1}f\|^2 \left( 2\mathcal N + \alpha^{-2}\right) \,.
$$
\end{lemma}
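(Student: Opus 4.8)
The plan is to exploit the fact that, although $f$ itself need not lie in $L^2(\R^3)$, the functions obtained by dividing $f$ by one power of $|k|$ do; this is the mechanism behind the Lieb--Yamazaki trick. The starting point is the elementary commutator identity
$$
[p_j,e^{ik\cdot x}] = k_j\, e^{ik\cdot x}, \qquad j=1,2,3,
$$
which follows from $p_j=-i\partial_{x_j}$. Since $\sum_j k_j\cdot k_j e^{ik\cdot x}=|k|^2 e^{ik\cdot x}$, this lets me write $e^{ik\cdot x}=|k|^{-2}\sum_j k_j\left(p_j e^{ik\cdot x}-e^{ik\cdot x}p_j\right)$, and, because $p_j$ commutes with $a_k^*$, I would substitute this into $a^*(e^{ik\cdot x}f)$ to obtain the decomposition
$$
a^*(e^{ik\cdot x}f) = \sum_{j=1}^3\left( p_j\, a^*(e^{ik\cdot x}g_j) - a^*(e^{ik\cdot x}g_j)\, p_j\right), \qquad g_j(k):=\frac{k_j}{|k|^2}f(k).
$$
The crucial gain is that $\sum_j|g_j(k)|^2=|k|^{-2}|f(k)|^2$, so each $g_j\in L^2(\R^3)$ with $\sum_j\|g_j\|^2=\||k|^{-1}f\|^2$, even when $f\notin L^2$; hence the operators $a^*(e^{ik\cdot x}g_j)$ and their adjoints are controlled by $\mathcal N^{1/2}$ through Lemma \ref{cran}.

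Next I would pass to the quadratic form. Since $a(e^{ik\cdot x}f)+a^*(e^{ik\cdot x}f)$ is symmetric, for suitable $\Psi$ one has $\langle\Psi,(a(e^{ik\cdot x}f)+a^*(e^{ik\cdot x}f))\Psi\rangle = 2\,\re\langle\Psi, a^*(e^{ik\cdot x}f)\Psi\rangle$. Inserting the decomposition and using the self-adjointness of $p_j$ to move it onto the left factor gives
$$
\re\langle\Psi, a^*(e^{ik\cdot x}f)\Psi\rangle = \re\sum_j\left(\langle p_j\Psi, a^*(e^{ik\cdot x}g_j)\Psi\rangle - \langle a(e^{ik\cdot x}g_j)\Psi, p_j\Psi\rangle\right).
$$
I would then estimate the two families of terms by Cauchy--Schwarz, invoking the first bound of Lemma \ref{cran} for $\|a(e^{ik\cdot x}g_j)\Psi\|\le\|g_j\|\,\|\mathcal N^{1/2}\Psi\|$ and the second for $\|a^*(e^{ik\cdot x}g_j)\Psi\|\le\|g_j\|\,\|(\mathcal N+\alpha^{-2})^{1/2}\Psi\|$. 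Summing over $j$ and applying Cauchy--Schwarz in $j$ together with $\sum_j\|g_j\|^2=\||k|^{-1}f\|^2$ and $\sum_j\|p_j\Psi\|^2=\|p\Psi\|^2$ yields a bound of the shape $\||k|^{-1}f\|\,\|p\Psi\|\big(\|\mathcal N^{1/2}\Psi\|+\|(\mathcal N+\alpha^{-2})^{1/2}\Psi\|\big)$.

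Finally I would apply Young's inequality to each of the two pieces \emph{separately}, rather than to their sum. Writing $2XY\le tX^2+t^{-1}Y^2$ with $X=\|p\Psi\|$ and $t=\epsilon/2$ turns the two pieces into $\epsilon\|p\Psi\|^2 + 2\epsilon^{-1}\||k|^{-1}f\|^2\big(\|\mathcal N^{1/2}\Psi\|^2+\|(\mathcal N+\alpha^{-2})^{1/2}\Psi\|^2\big)$. The key arithmetic simplification is the identity $\|\mathcal N^{1/2}\Psi\|^2+\|(\mathcal N+\alpha^{-2})^{1/2}\Psi\|^2=\langle\Psi,(2\mathcal N+\alpha^{-2})\Psi\rangle$, which reproduces exactly the factor $2\mathcal N+\alpha^{-2}$ in the claim; keeping the two bounds separate, instead of crudely estimating $\|\mathcal N^{1/2}\Psi\|$ by $\|(\mathcal N+\alpha^{-2})^{1/2}\Psi\|$, is what delivers the sharp constant. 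The only genuine subtlety, and the point deserving care, is rigor: since $f$ need not be in $L^2$, the operator $a^*(e^{ik\cdot x}f)$ is not a priori defined, so the decomposition and the manipulations above should first be carried out for $f$ with both $f$ and $|k|^{-1}f$ in $L^2$ (or on Schwartz, finite-particle vectors $\Psi$) and then extended by the density inherent in the resulting form bound. Everything else is routine.
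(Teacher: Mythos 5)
Your argument is correct and is essentially the paper's own proof of this lemma: the paper introduces $Z_j=a(e^{ik\cdot x}k_j|k|^{-2}f)$, writes $a(e^{ikx}f)+a^*(e^{ikx}f)=\sum_j[Z_j-Z_j^*,p_j]$, and completes the square to get $\epsilon p_j^2+2\epsilon^{-1}(2Z_j^*Z_j+[Z_j,Z_j^*])$, which is exactly your commutator decomposition with $g_j=k_j|k|^{-2}f$ carried out as an operator inequality rather than at the level of quadratic forms. Your bookkeeping (Cauchy--Schwarz in $j$, Young with $t=\epsilon/2$ on the two pieces separately, and the identity $\|\mathcal N^{1/2}\Psi\|^2+\|(\mathcal N+\alpha^{-2})^{1/2}\Psi\|^2=\langle\Psi,(2\mathcal N+\alpha^{-2})\Psi\rangle$) reproduces the stated constant exactly, so nothing further is needed.
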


Again, replacing $f$ by $-f$, we obtain
$$
a(e^{ikx} f)+a^*(e^{ikx}f) \geq -\epsilon p^2 - 2\epsilon^{-1}\||k|^{-1}f\|^2 \left( 2\mathcal N + \alpha^{-2}\right) \,.
$$

\begin{proof}
For $j=1,2,3$, we introduce
$$
Z_j= \int_{\R^d} dk \, \frac{k_j}{k^2} \, e^{-ikx} \, \overline{f(k)} a_k
$$
and write
\begin{align*}
a(e^{ikx} f)+a^*(e^{ikx}f) = \sum_{j=1}^3 [Z_j-Z_j^*,p_j] = \sum_{j=1}^3 \left(\left(Z_j-Z_j^*\right)p_j + p_j \left(Z_j^* -Z_j\right) \right) \,.  
\end{align*}
We bound, for every $j$,
\begin{align*}
\left(Z_j-Z_j^*\right)p_j + p_j \left(Z_j^* -Z_j\right) 
& \leq \epsilon p_j^2 + \epsilon^{-1} \left(Z_j-Z_j^*\right) \left(Z_j^*-Z_j\right) \\
& \leq \epsilon p_j^2 + 2 \epsilon^{-1} \left(Z_j^*Z_j + Z_j Z_j^* \right) \\
& = \epsilon p_j^2 + 2\epsilon^{-1} \left(2 Z_j^*Z_j + [Z_j, Z_j^*] \right)
\,.
\end{align*}
It remains to bound the last two terms. For every $\Psi$, we have, by Cauchy--Schwarz,
\begin{align*}
\langle \Psi, \sum_{j=1}^3 Z_j^*Z_j \Psi \rangle & = \iint_{\R^3\times\R^3} \frac{dk'}{k'^2} \frac{dk}{k^2} k'\cdot k \ f(k') \overline{f(k)} \langle \Psi,a_{k'}^* e^{i(k'-k)\cdot x} a_k \Psi\rangle \\
& \leq \left( \int_{\R^3} \frac{dk}{|k|} |f(k)| \|a_k \Psi\| \right)^2 \\
& \leq \||k|^{-1}f\|^2 \langle\Psi,\mathcal N\Psi\rangle \,.
\end{align*}
On the other hand, because of the commutation relations we have
\begin{align*}
\sum_{j=1}^3 [Z_j, Z_j^*] = \iint_{\R^3\times\R^3} \frac{dk'}{k'^2} \frac{dk}{k^2} k'\cdot k \ \overline{f(k')} f(k) e^{-i(k'-k)\cdot x} [a_{k'},  a_k^*] = \alpha^{-2} \||k|^{-1}f\|^2 \,.
\end{align*}
This concludes the proof of the lemma.
\end{proof}

We are now in position to prove form-boundedness. Given a number $\Lambda>0$ to be specified later, we decompose
\begin{equation}
\label{eq:hdecomp}
H=H_\phi + A + B + B^* \,,
\end{equation}
where
\begin{align*}
A = \mathcal N + a(\phi) + a^*(\phi) + \int_{|k|<\Lambda} \frac{dk}{|k|} \left( e^{ik\cdot x} a_k + e^{-ik\cdot x} a_k^* \right)
\end{align*}
and
$$
B = \int_{|k|>\Lambda} \frac{dk}{|k|} e^{ik\cdot x} a_k \,.
$$
For any choice of $\Lambda>0$, Lemma \ref{interact} implies that $A-\mathcal N$ is infinitesimally form bounded with respect to $\mathcal N$.

We claim that for any $\epsilon>0$ there is a $\Lambda>0$ such that $B+B^*$ is form bounded with respect to $p^2 + \mathcal N$ with form bound $\epsilon$. Indeed, this follows from Lemma \ref{interactly} by choosing $\Lambda$ so large that
$$
4\epsilon^{-1} \left\| |k|^{-2} \chi_{\{|k|>\Lambda\}} \right\|^2 =\epsilon \,. 
$$

This argument shows that for every $\epsilon>0$ there is a $C_\epsilon$ and a $\Lambda$ such that
$$
H \geq (1-\epsilon) (p^2 +\mathcal N) - C_\epsilon
\quad\text{and}\quad
H \leq (1+\epsilon) (p^2 +\mathcal N) + C_\epsilon \,.
$$
The constant $C_\epsilon$ depends on $\alpha$ through the use of Lemma \ref{interactly}, but it is uniformly bounded for $\alpha\geq \alpha_0$. Thus, by the same argument as in Lemma \ref{energyconsh0} we obtain

\begin{lemma}\label{energyconsh}
If $\phi\in L^2(\R^3)$ and $\alpha_0>0$, then
$$
\sup_{\alpha\geq \alpha_0}\ \sup_{t\in\R} \left\| \left(p^2 +\mathcal N+1\right)^{1/2} e^{-iHt} \left(p^2+\mathcal N+1\right)^{-1/2} \right\| <\infty \,.
$$
\end{lemma}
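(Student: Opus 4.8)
The plan is to mimic the proof of Lemma \ref{energyconsh0} verbatim, with $H_\phi$ replaced by $H$ and $p^2$ replaced by $p^2+\mathcal N$, while tracking the dependence on $\alpha$. The input is the pair of two-sided form bounds
$$
(1-\epsilon)(p^2+\mathcal N) - C_\epsilon \leq H \leq (1+\epsilon)(p^2+\mathcal N) + C_\epsilon
$$
established just above, together with the fact that for every fixed $\epsilon\in(0,1)$ the constant $C_\epsilon$ can be chosen uniformly for $\alpha\geq\alpha_0$. Fixing such an $\epsilon$, the lower bound exhibits $H+C_\epsilon$ as a nonnegative quadratic form comparable to $p^2+\mathcal N+1$, so by the KLMN theorem $H$ is a self-adjoint, lower semi-bounded operator whose form domain coincides with $Q(p^2+\mathcal N)$. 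In particular $e^{-iHt}$ is a well-defined unitary group and leaves $Q(p^2+\mathcal N)$ invariant.

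First I would record the energy-conservation identity that plays the role of the middle equality in Lemma \ref{energyconsh0}. Since $e^{-iHt}$ and $(H+C_\epsilon)^{1/2}$ are both functions of the self-adjoint operator $H$, they commute, and $e^{-iHt}$ is unitary; hence for every $\Psi\in Q(p^2+\mathcal N)$,
$$
\|(H+C_\epsilon)^{1/2}e^{-iHt}\Psi\| = \|e^{-iHt}(H+C_\epsilon)^{1/2}\Psi\| = \|(H+C_\epsilon)^{1/2}\Psi\| \,.
$$

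Next I would chain the two form bounds around this identity exactly as in Lemma \ref{energyconsh0}. For $\Psi\in Q(p^2+\mathcal N)$, the lower bound gives $\|(p^2+\mathcal N)^{1/2}e^{-iHt}\Psi\|^2 \leq (1-\epsilon)^{-1}\|(H+C_\epsilon)^{1/2}e^{-iHt}\Psi\|^2$; the conservation identity rewrites the right-hand side as $(1-\epsilon)^{-1}\|(H+C_\epsilon)^{1/2}\Psi\|^2$; and the upper bound estimates this by $(1-\epsilon)^{-1}\|((1+\epsilon)(p^2+\mathcal N)+2C_\epsilon)^{1/2}\Psi\|^2$. Adding $\|e^{-iHt}\Psi\|^2=\|\Psi\|^2$ to the left-hand side to complete the squares into $p^2+\mathcal N+1$ yields
$$
\|(p^2+\mathcal N+1)^{1/2} e^{-iHt}\Psi\|^2 \leq \frac{1+\epsilon}{1-\epsilon}\,\|(p^2+\mathcal N+1)^{1/2}\Psi\|^2 + \Big(\frac{2C_\epsilon}{1-\epsilon}+1\Big)\|\Psi\|^2 \,.
$$
Replacing $\Psi$ by $(p^2+\mathcal N+1)^{-1/2}\Psi$ and taking the supremum over $t$ gives the asserted operator bound, and since both $\epsilon$ and $C_\epsilon$ may be taken independent of $\alpha$ for $\alpha\geq\alpha_0$, the resulting constant is uniform in $\alpha$, giving also the supremum over $\alpha\geq\alpha_0$.

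The only genuinely non-routine point, and the step I would be most careful about, is the justification of the conservation identity together with the domain bookkeeping: one needs $H$ to be self-adjoint (not merely symmetric) with form domain $Q(p^2+\mathcal N)$, so that $e^{-iHt}$ preserves this form domain and the displayed manipulations are legitimate for all $\Psi\in Q(p^2+\mathcal N)$. This is precisely what the KLMN construction supplies from the two-sided form bounds; the remainder is the verbatim computation from Lemma \ref{energyconsh0}.
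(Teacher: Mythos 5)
Your proof is correct and is essentially the paper's own argument: the paper likewise derives the two-sided form bounds $ (1-\epsilon)(p^2+\mathcal N)-C_\epsilon\leq H\leq (1+\epsilon)(p^2+\mathcal N)+C_\epsilon$ with $C_\epsilon$ uniform for $\alpha\geq\alpha_0$ and then simply invokes ``the same argument as in Lemma \ref{energyconsh0}.'' Your additional remarks on the KLMN construction and the $+1$ bookkeeping only make explicit what the paper leaves implicit.
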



\section{Proof of Theorem \ref{main2}}

We shall prove Theorem \ref{main2} by a Gronwall-type argument. More precisely, we shall prove the following proposition.

\begin{proposition}\label{mainprop}
Let $\Psi$ be as in Theorem \ref{main2}. Then
$$
\frac{d}{dt} \left\| \left(e^{-iHt}-e^{-iH_\phi t}\right)\Psi \right\|^2 = f(t) + g(t) \,,
$$
where
$$
f(t) \leq C M \alpha^{-1} \left\| \left(e^{-iHt}-e^{-iH_\phi t}\right)\Psi \right\|
$$
and, for all $T\geq 0$,
$$
\int_0^T dt\, g(t) \leq C M^2 \alpha^{-2} T \,.
$$
Here, $C$ depends only on $\alpha_0$ and an upper bound on $\|\phi\|$.
\end{proposition}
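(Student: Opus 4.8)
The plan is to differentiate $\|\Psi_t\|^2$, where $\Psi_t:=(e^{-iHt}-e^{-iH_\phi t})\Psi=u_t-v_t$ with $u_t:=e^{-iHt}\Psi$ and $v_t:=e^{-iH_\phi t}\Psi$, and to organize the result so that a Gronwall argument yields \eqref{eq:mainequiv}. Write $\mathcal{K}:=H-H_\phi=\mathcal{N}+a(\phi)+a^*(\phi)+\Phi$ with $\Phi:=\int_{\R^3}\tfrac{dk}{|k|}(e^{-ik\cdot x}a_k+e^{ik\cdot x}a_k^*)$. Since $\mathcal{K}$ and $H_\phi$ are self-adjoint, a direct computation gives
\[
\frac{d}{dt}\|\Psi_t\|^2=2\,\im\langle\Psi_t,\mathcal{K}u_t\rangle ,
\]
and, because $\langle\Psi_t,\mathcal{K}\Psi_t\rangle$ is real, $u_t$ may be replaced by $v_t=u_t-\Psi_t$:
\[
\frac{d}{dt}\|\Psi_t\|^2=2\,\im\langle\Psi_t,\mathcal{K}v_t\rangle .
\]
This is the decisive reduction. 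The operator $\mathcal{K}$ now acts on $v_t=e^{-iH_\phi t}\Psi$, and since $H_\phi$ commutes with $\mathcal{N}$ and conserves the kinetic energy (Lemma \ref{energyconsh0}), $v_t$ inherits the smallness of the initial phonon data: $\|\mathcal{N}^{1/2}v_t\|=\|\mathcal{N}^{1/2}\Psi\|\le M\alpha^{-1}$, $\|\mathcal{N}v_t\|\le M\alpha^{-2}$ and $\||p|(\mathcal{N}+\alpha^{-2})^{1/2}v_t\|\le CM\alpha^{-1}$, the last being exactly what the second hypothesis in \eqref{eq:apriori} supplies.

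I would first dispose of the regular part $\mathcal{N}+a(\phi)+a^*(\phi)+\Phi_<$, where $\Phi_<$ is the piece of $\Phi$ with $|k|<\Lambda$. By Lemma \ref{cran} and the bounds on $v_t$ above, each of these applied to $v_t$ has norm at most $CM\alpha^{-1}$, so Cauchy--Schwarz bounds their contribution by $CM\alpha^{-1}\|\Psi_t\|$; this goes into $f(t)$.

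The genuine difficulty is the high-frequency interaction $\Phi_>$ (coupling $|k|^{-1}\chi_{\{|k|>\Lambda\}}\notin L^2$). Following Lieb--Yamazaki (Lemma \ref{interactly}) I write $\Phi_>=\sum_{j=1}^3[Z_j-Z_j^*,p_j]$ and set $Y_j:=Z_j-Z_j^*$, so that
\[
2\,\im\langle\Psi_t,\Phi_> v_t\rangle=2\,\im\sum_{j=1}^3\Big(\langle\Psi_t,Y_j p_j v_t\rangle-\langle p_j\Psi_t,Y_j v_t\rangle\Big).
\]
In the first family $Y_j p_j v_t$ is of order $\alpha^{-1}$, since $\mathcal{N}^{1/2}$ commutes with $p_j$ and $\||p|(\mathcal{N}+\alpha^{-2})^{1/2}v_t\|\le CM\alpha^{-1}$; these terms are bounded by $CM\alpha^{-1}\|\Psi_t\|$ and again join $f(t)$. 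The second family $\langle p_j\Psi_t,Y_j v_t\rangle$ is the main obstacle: $Y_j v_t$ is only $O(\alpha^{-1})$, while its partner $p_j\Psi_t$ carries a derivative and is merely $O(M)$; no factor $\|\Psi_t\|$ can be extracted, and the term is only $O(M^2\alpha^{-1})$ pointwise — one full power of $\alpha$ short of an admissible source $g$.

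To recover the missing power I would integrate by parts in time. The algebraic key is that $Y_j=[\Xi,p_j]$, where $\Xi:=a(e^{ik\cdot x}\eta)+a^*(e^{ik\cdot x}\eta)$ is the field operator with the once-more-regularized coupling $\eta:=|k|^{-3}\chi_{\{|k|>\Lambda\}}$; moreover $[V_\phi,\Xi]=0$ and $[p^2,\Xi]=\Phi_>-2\sum_j Y_j p_j$. Differentiating $\langle\Psi_t,\Xi v_t\rangle$ and using $\dot\Psi_t=-iH_\phi\Psi_t-i\mathcal{K}u_t$, $\dot v_t=-iH_\phi v_t$ yields the identity
\[
2\,\im\langle\Psi_t,\Phi_> v_t\rangle=-\frac{d}{dt}\big(2\,\re\langle\Psi_t,\Xi v_t\rangle\big)+4\,\im\Big\langle\Psi_t,\sum_j Y_j p_j v_t\Big\rangle-2\,\im\langle\mathcal{K}u_t,\Xi v_t\rangle .
\]
The middle term is again of $f$-type ($\le CM\alpha^{-1}\|\Psi_t\|$). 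The total derivative integrates to boundary terms that vanish at $t=0$ (since $\Psi_0=0$) and, at $t=T$, are bounded by $2\|\Psi_T\|\,\|\Xi v_T\|\le CM\alpha^{-1}\|\Psi_T\|$, which Young's inequality absorbs into $\tfrac12\|\Psi_T\|^2+O(M^2\alpha^{-2})$ when the scheme is closed. The decisive gain is that $(1+|k|)\eta\in L^2(\R^3)$, so Corollary \ref{pcr} applies to $\Xi$ and gives $\|\Xi v_t\|\le CM\alpha^{-1}$, $\|\mathcal{N}^{1/2}\Xi v_t\|\le CM\alpha^{-2}$ and $\|(p^2+1)^{1/2}\mathcal{N}^{1/2}\Xi v_t\|\le CM\alpha^{-2}$; feeding these into the remainder $\langle\mathcal{K}u_t,\Xi v_t\rangle=\langle u_t,\mathcal{K}\Xi v_t\rangle$ — splitting $\mathcal{K}$ once more into its regular and singular parts, treating the singular part by a second appeal to Lemma \ref{interactly}, and pairing every singular factor with the small state $\Xi v_t$ rather than with $u_t$ — shows it is genuinely $O(M^2\alpha^{-2})$. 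These $O(\alpha^{-2})$ contributions, together with the boundary term, constitute $g(t)$, and $\int_0^T g\,dt\le CM^2\alpha^{-2}T$. The hardest step is precisely this last remainder bound, where the derivatives and number operators must be routed so that no singular factor ever lands on $u_t$.
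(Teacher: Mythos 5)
Your proposal follows the paper's strategy almost step for step: the same differentiation identity $\tfrac{d}{dt}\|\Psi_t\|^2=2\im\langle \Psi_t,(H-H_\phi)e^{-iH_\phi t}\Psi\rangle$, the same isolation of the high-frequency part of the interaction, the same Lieb--Yamazaki commutator representation to peel off an $f$-type piece of order $M\alpha^{-1}\|\Psi_t\|$, the same integration by parts in time on the $[H_\phi,\,\cdot\,]$ piece to gain the missing power of $\alpha^{-1}$, and the same use of Lemma \ref{cran}, Lemma \ref{lemmab} and Corollary \ref{pcr} to show that the remainder $\langle (H-H_\phi)e^{-iHt}\Psi,\ \cdot\ \rangle$ is $O(M^2\alpha^{-2})$ once every singular factor is routed onto the small vector. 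The only organizational difference up to that point is that you run the commutator trick on all of $\Phi_>=B+B^*$, whereas the paper treats the annihilation part $B$ directly via Lemma \ref{lemmab}; both work. (One small caution: Lemma \ref{interactly} is a quadratic-form bound and does not directly control the cross term $\langle u_t,\Phi_>\,\Xi v_t\rangle$; what you actually need there is the operator bounds of Lemma \ref{lemmab} and Corollary \ref{pcr}, which you do also invoke.)

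The one place where you genuinely deviate --- and where your argument, as written, does not prove the Proposition as stated --- is the boundary term $-2\re\langle\Psi_T,\Xi v_T\rangle$ produced by the time integration by parts. You bound it by $2\|\Psi_T\|\,\|\Xi v_T\|\le CM\alpha^{-1}\|\Psi_T\|$ and absorb it by Young's inequality into $\tfrac12\|\Psi_T\|^2+O(M^2\alpha^{-2})$. That does close a Gronwall argument, but the resulting additive $O(M^2\alpha^{-2})$ is not of the form $CM^2\alpha^{-2}T$, so the claimed estimate $\int_0^T g\,dt\le CM^2\alpha^{-2}T$ fails for your choice of $g$, and the final bound degrades from $2M^2\alpha^{-2}(e^{CT}-1)$ to something like $CM^2\alpha^{-2}e^{CT}$. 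The paper avoids this entirely: since
$$
e^{iH_\phi T}\Psi_T=-i\int_0^T ds\, e^{iH_\phi s}(H-H_\phi)e^{-iHs}\Psi\,,
$$
the boundary term is itself a time integral over $[0,T]$, whose integrand is estimated exactly like the bulk term by moving $H-H_\phi$ onto the small vector $a^*(e^{ikx}|k|^{-3}\chi_{\{|k|>\Lambda\}})e^{-iH_\phi T}\Psi$; this is why the paper's $\Psi(t)$ carries both the $t$-dependent and the $T$-dependent copies of that operator. If you replace your Young-inequality step by this observation, your proof coincides with the paper's and yields the Proposition exactly as stated.
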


\begin{proof}[Proof of Theorem \ref{main2} given Proposition \ref{mainprop}]
It suffices to consider times $T\geq 0$. Then
$$
A(T) := \left\| \left(e^{-iHT}-e^{-iH_\phi T}\right)\Psi \right\|^2 = \int_0^T dt\, f(t) + \int_0^T dt\, g(t) \,.
$$
According to Proposition \ref{mainprop} we have $f(t) \leq C M^2 \alpha^{-2} + CA(t)$. This, together with the bound on the integral of $g$, implies
$$
A(T) \leq 2C M^2 \alpha^{-2}T + C \int_0^T dt\, A(t) \,.
$$
Thus,
$$
\left(A(T)+ 2M^2 \alpha^{-2}\right) \leq 2M^2 \alpha^{-2} + C \int_0^T dt\, (A(t)+2M^2 \alpha^{-2})
$$
and, by Gronwall's inequality,
$$
A(t)+2M^2 \alpha^{-2} \leq 2M^2 \alpha^{-2} e^{Ct} \,.
$$
This is inequality \eqref{eq:mainequiv} which, as explained before, is equivalent to the inequality stated in Theorem \ref{main2}.
\end{proof}

It remains to prove Proposition \ref{mainprop}, and so we differentiate
\begin{align*}
\frac{d}{dt} \left\| \left(e^{-iHt}-e^{-iH_\phi t}\right)\Psi \right\|^2
& = 2\im \langle e^{-iHt}\Psi,(H-H_\phi) e^{-iH_\phi t}\Psi\rangle \\
& = 2\im \langle \left( e^{-iHt}-e^{-iH_\phi t}\right)\Psi,(H-H_\phi) e^{-iH_\phi t}\Psi\rangle \\
& = f_1(t)+f_2(t)+h(t)\,.
\end{align*}
In the middle equality we used the fact that $H$ and $H_\phi$ are self-adjoint. The functions $f_1$, $f_2$ and $h$ are defined by
\begin{align*}
f_1(t) &= 2\im \langle \left( e^{-iHt}-e^{-iH_\phi t}\right)\Psi,A e^{-iH_\phi t}\Psi\rangle \,,\\
f_2(t) &= 2\im \langle \left( e^{-iHt}-e^{-iH_\phi t}\right)\Psi,B e^{-iH_\phi t}\Psi\rangle \,, \\
h(t) &= 2\im \langle \left( e^{-iHt}-e^{-iH_\phi t}\right)\Psi,B^* e^{-iH_\phi t}\Psi\rangle
\end{align*}
in terms of the decomposition $H=H_\phi+A+B+B^*$ from \eqref{eq:hdecomp}. As we will see below, the functions $f_1$ and $f_2$ contribute to the $f$-piece in Proposition \ref{mainprop}, whereas $h$ will be further decomposed into an $f$-piece and a $g$-piece.

In the decomposition above, the cut-off value $\Lambda$ is fixed and we do not make it explicit in our bounds. Also, we do not indicate the dependence of the constants on $\phi$ (and its $L^2$-norm) and $\alpha_0$. As a final preliminary, let us note that the a-priori bounds \eqref{eq:apriori} implies
\begin{equation}
\label{eq:apriori1}
\left\|\left(p^2+1\right)^{1/2} \mathcal N^{1/2} \Psi \right\| \leq M\alpha^{-1} \,.
\end{equation}
Indeed, this follows by the Cauchy--Schwarz inequality, since $\|(p^2+1)^{1/2}\Psi\|\leq M$ and $\|(p^2+1)^{1/2}\mathcal N\Psi\|\leq M\alpha^{-2}$. Moreover, by \eqref{eq:apriori}
\begin{equation}
\label{eq:apriori2}
\left\| \mathcal N \Psi \right\| \leq C M\alpha^{-1} \,.
\end{equation}
with $C=\alpha_0^{-1}$.


\subsection{Bound on $f_1$}

It is an easy consequence of Lemma \ref{cran} that
$$
\| A\xi\| \leq C \left\| \left(\mathcal N + \alpha^{-1}\right)\xi\right\|
\qquad\text{for all}\ \xi \,,
$$
and, thus,
\begin{align*}
|f_1(t)| & \leq 2 \left\| \left( e^{-iHt}-e^{-iH_\phi t}\right)\Psi\right\| \left\|A e^{-iH_\phi t}\Psi\right\| \\
& \leq 2C \left\| \left( e^{-iHt}-e^{-iH_\phi t}\right)\Psi\right\| \left\|\left(\mathcal N + \alpha^{-1}\right) \Psi\right\| \\
& \leq C' M \alpha^{-1} \left\| \left( e^{-iHt}-e^{-iH_\phi t}\right)\Psi\right\| \,.
\end{align*}
Here we also used \eqref{eq:apriori2} and the fact that $\mathcal N$ commutes with $H_\phi$. This bound on $f_1$ is already of the form required for the application of Proposition \ref{mainprop}.


\subsection{Bound on $f_2$}

To estimate $f_2$ we make use of the following lemma.

\begin{lemma}\label{lemmab}
We have, with a constant depending only on $\Lambda$,
$$
\|B\xi\| \leq C \left\| \left(p^2+1\right)^{1/2} \mathcal N^{1/2}\xi\right\|
$$
and
$$
\left\|\left(\mathcal N+\alpha^{-2}\right)^{-1/2}B\xi\right\| \leq C \left\| \left(p^2+1\right)^{1/2}\xi\right\| \,.
$$
\end{lemma}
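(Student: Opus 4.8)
The plan is to deduce the two inequalities from a single one and to prove that one by a resolvent-insertion argument that trades the missing square-integrability of $|k|^{-1}\chi_{\{|k|>\Lambda\}}$ for one power of the electron resolvent.

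First I would note that the two estimates are equivalent. Since $B$ is built from annihilation operators, the intertwining relations \eqref{eq:inter} give $(\mathcal N+\alpha^{-2})^{-1/2}B = B\,\mathcal N^{-1/2}$ (on the orthogonal complement of the vacuum, which $B$ annihilates anyway). Hence the second bound applied to $\xi$ is exactly the first bound applied to $\mathcal N^{-1/2}\xi$, the factor $(p^2+1)^{1/2}\mathcal N^{1/2}$ on the right collapsing to $(p^2+1)^{1/2}$. So it suffices to prove $\|B\xi\| \le C\,\|(p^2+1)^{1/2}\mathcal N^{1/2}\xi\|$.

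For this first bound the naive estimate $\|B\xi\|\le\int_{|k|>\Lambda}\frac{dk}{|k|}\|a_k\xi\|$ is useless, precisely because $|k|^{-1}\chi_{\{|k|>\Lambda\}}\notin L^2(\R^3)$; the role of the factor $(p^2+1)^{1/2}$ is to supply the missing decay in $k$. Instead I would test against an arbitrary unit vector $\eta$, write $\langle\eta,B\xi\rangle = \int_{|k|>\Lambda}\frac{dk}{|k|}\,\langle e^{-ik\cdot x}\eta, a_k\xi\rangle$, insert $(p^2+1)^{-1/2}(p^2+1)^{1/2}$ in front of $a_k\xi$ (using that $a_k$ commutes with $p$), and push the resolvent onto $\eta$ via the shift identity
\[
(p^2+1)^{-1/2}e^{-ik\cdot x} = e^{-ik\cdot x}\,\bigl((p-k)^2+1\bigr)^{-1/2}.
\]
A Cauchy--Schwarz in $k$, keeping $|k|^{-1}$ with the first factor, then gives
\[
|\langle\eta,B\xi\rangle| \le \left(\int_{|k|>\Lambda}\frac{dk}{|k|^2}\,\bigl\|((p-k)^2+1)^{-1/2}\eta\bigr\|^2\right)^{1/2}\left(\int_{\R^3}\bigl\|a_k(p^2+1)^{1/2}\xi\bigr\|^2\,dk\right)^{1/2}.
\]
The second factor equals $\|\mathcal N^{1/2}(p^2+1)^{1/2}\xi\| = \|(p^2+1)^{1/2}\mathcal N^{1/2}\xi\|$ by the definition \eqref{eq:nop} of $\mathcal N$ and the commutation of $p$ with $\mathcal N$, which is exactly the right-hand side we want.

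The main obstacle is the first factor. Diagonalizing in the electron momentum variable $\pi$, it equals $\langle\eta, m(p)\eta\rangle$, where $m(p)$ is multiplication by
\[
m(\pi) = \int_{|k|>\Lambda}\frac{dk}{|k|^2\,((\pi-k)^2+1)},
\]
so the whole bound reduces to $\sup_{\pi\in\R^3}m(\pi)<\infty$, with $C=\|m\|_\infty^{1/2}$ depending only on $\Lambda$. This is where the resolvent earns its keep: as $|k|\to\infty$ the extra factor $((\pi-k)^2+1)^{-1}$ behaves like $|k|^{-2}$, so the integrand decays like $|k|^{-4}$ and is integrable (the bare $|k|^{-2}$ was not), while near the resonance $k\approx\pi$ the $+1$ removes the singularity. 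Concretely I would check that $m$ is finite for each $\pi$, is continuous, and tends to $0$ as $|\pi|\to\infty$ (splitting into $|k|\le 2|\pi|$ and $|k|>2|\pi|$ and using $|\pi-k|\ge|k|/2$ on the latter), whence $m$ is bounded. As usual these manipulations would first be carried out on the dense domain of vectors with finitely many particles and Schwartz dependence on $x$, and then extended to general $\xi$ by density. (One could alternatively write $B=\sum_j[p_j,W_j]$ with $W_j$ having the square-integrable symbol $k_j|k|^{-3}\chi_{\{|k|>\Lambda\}}$, in the spirit of Lemma \ref{interactly}, but the resolvent route avoids the bookkeeping of the residual commutators.)
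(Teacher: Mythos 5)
Your argument is correct and is in essence the paper's own proof in different clothing: both reduce the first inequality to the finiteness of $\sup_{\pi}\int_{|k|>\Lambda}|k|^{-2}\bigl((\pi-k)^2+1\bigr)^{-1}dk$, the paper via a weighted Cauchy--Schwarz on the explicit Fock-space kernel of $B$, and you via duality plus the shift identity $(p^2+1)^{-1/2}e^{-ik\cdot x}=e^{-ik\cdot x}((p-k)^2+1)^{-1/2}$. Your reduction of the second inequality to the first through the intertwining relations \eqref{eq:inter} is a clean substitute for the step the paper dismisses as ``proved similarly.''
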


\begin{proof}
If we describe the electron in momentum space, then $B\xi$ for $\xi=(\xi^{(0)},\xi^{(1)},\ldots)$ is given by
$$
\left(B\xi\right)^{(n)}(p,k_1,\ldots,k_n) = \sqrt\alpha\, \sqrt{n+1} \int_{|k|>\Lambda} \frac{dk}{|k|} \xi^{(n+1)}(p+k,\alpha k,k_1,\ldots,k_n) \,. 
$$
This follows from the standard representation of $a(f)$ together with the rescaling explained in the appendix. By Cauchy--Schwarz,
\begin{align*}
\left\|B\xi\right\|^2 & = \alpha \sum_{n=0}^\infty (n+1) \int_{\R^3}dp\, \int_{\R^{3n}}d{\bf k} \left| \int_{|k|>\Lambda} \frac{dk}{|k|} \xi^{(n+1)}(p+k,\alpha k,{\bf k}) \right|^2 \\
& = \alpha \sum_{n=0}^\infty (n+1) \int_{\R^3}dp\, \int_{\R^{3n}}d{\bf k} \\ 
&\hspace{2cm} \times \iint_{|k|>\Lambda,|k'|>\Lambda} \frac{dk'}{|k'|}\frac{dk}{|k|} \overline{\xi^{(n+1)}(p+k',\alpha k',{\bf k})} \xi^{(n+1)}(p+k,\alpha k,{\bf k}) \\
& \leq \alpha \sum_{n=0}^\infty (n+1) \int_{\R^3}dp\, \int_{\R^{3n}}d{\bf k} \\ 
&\hspace{2cm} \times \iint_{|k|>\Lambda,|k'|>\Lambda} dk'\,dk\, \frac{1+(p+k)^2}{k'^2(1+(p+k')^2)} |\xi^{(n+1)}(p+k,\alpha k,{\bf k})|^2 \\
& \leq C \alpha \sum_{n=0}^\infty (n+1) \int_{\R^3}dp\, \int_{\R^{3n}}d{\bf k} \int_{\R^3} dk\, (1+(p+k)^2) |\xi^{(n+1)}(p+k,\alpha k,{\bf k})|^2 \\
& = C \alpha^{-2} \sum_{n=0}^\infty (n+1) \int_{\R^3}dp\, \int_{\R^{3n}}d{\bf k} \int_{\R^3} d\tilde{k}\, (1+p^2) |\xi^{(n+1)}(p,\tilde{k},{\bf k})|^2 \\
& = C \left\|\mathcal N^{1/2} (1+p^2)^{1/2}\xi\right\|^2
\end{align*}
with
$$
C= \sup_{p\in\R^3} \int_{|k'|>\Lambda} \frac{dk'}{{k'}^2(1+(p+k')^2)} <\infty \,.
$$
This proves the first bound in the lemma. The second one is proved similarly and we omit the details.
\end{proof}

Using this lemma, we bound
\begin{align*}
|f_2(t)| & \leq 2 \left\| \left( e^{-iHt}-e^{-iH_\phi t}\right)\Psi\right\| \left\|B e^{-iH_\phi t}\Psi\right\| \\
& \leq 2C \left\| \left( e^{-iHt}-e^{-iH_\phi t}\right)\Psi\right\| \left\|(p^2+1)^{1/2} \mathcal N^{1/2} e^{-iH_\phi t} \Psi\right\| \,.
\end{align*}
Since $\mathcal N$ commutes with $H_\phi$, by means of the energy conservation lemma \ref{energyconsh0} and by \eqref{eq:apriori1} we find
$$
\left\|(p^2+1)^{1/2} \mathcal N^{1/2} e^{-iH_\phi t} \Psi\right\|
\leq C' \left\|(p^2+1)^{1/2}  \mathcal N^{1/2} \Psi\right\|
\leq C' M \alpha^{-1} \,.
$$
Thus,
$$
|f_2(t)| \leq 2CC' M\alpha^{-1} \left\| \left( e^{-iHt}-e^{-iH_\phi t}\right)\Psi\right\| \,,
$$
which is a bound of the form required for Proposition \ref{mainprop}.


\subsection{Decomposition of $h$}

It remains to deal with the term $h$, which involves the operator $B^*$. We split this operator as follows,
\begin{align*}
B^* & = \int_{|k|>\Lambda} \frac{dk}{|k|^3} \left[ k\cdot p,e^{ik\cdot x}\right] a_k^* \\
&= \int_{|k|>\Lambda} \frac{dk}{|k|^3} \left( k\cdot p e^{ik\cdot x} + e^{ik\cdot x} k\cdot p\right) a_k^*
-2 \int_{|k|>\Lambda} \frac{dk}{|k|^3} e^{ik\cdot x} k\cdot p \ a_k^* \\
&= \left[H_\phi, a^*( e^{ikx} |k|^{-3} \chi_{\{|k|>\Lambda\}})\right] -2 \int_{|k|>\Lambda} \frac{dk}{|k|^3} e^{ik\cdot x} k\cdot p \ a_k^* \,.
\end{align*}
Accordingly, we decompose
$$
h(t) = f_3(t) + g(t) \,,
$$
where
$$
f_3(t) = - 4\im \left\langle \left( e^{-iHt}-e^{-iH_\phi t}\right)\Psi,\int_{|k|>\Lambda} \frac{dk}{|k|^3} e^{ik\cdot x} k\cdot p \ a_k^* \ e^{-iH_\phi t} \Psi\right\rangle
$$
and
$$
g(t) = 2\im \left\langle \left( e^{-iHt}-e^{-iH_\phi t}\right)\Psi,\left[H_\phi, a^*( e^{ikx} |k|^{-3} \chi_{\{|k|>\Lambda\}})\right] e^{-iH_\phi t} \Psi\right\rangle \,.
$$


\subsection{Bound on $f_3$}

We bound
\begin{align*}
|f_3(t)| & \leq 4 \left\| \left( e^{-iHt}-e^{-iH_\phi t}\right)\Psi\right\| \left\|\int_{|k|>\Lambda} \frac{dk}{|k|^3} e^{ik\cdot x} k\cdot p \ a_k^* \ e^{-iH_\phi t} \Psi\right\| \\
& \leq 4 \left\| \left( e^{-iHt}-e^{-iH_\phi t}\right)\Psi\right\| \sum_{j=1}^3
\left\|a^*(e^{ik\cdot x}k_j |k|^{-3} \chi_{\{|k|>\Lambda\}}) p_j \ e^{-iH_\phi t} \Psi\right\| \,.
\end{align*}
According to Lemma \ref{cran} and energy conservation, Lemma \ref{energyconsh0}, we have
\begin{align*}
\Big \|a^*(e^{ik\cdot x}k_j |k|^{-3} &\chi_{\{|k|>\Lambda\}}) p_j \ e^{-iH_\phi t} \Psi \Big\|
\\ & \leq \| k_j |k|^{-3} \chi_{\{|k|>\Lambda\}} \| \left\|\left( \mathcal N+\alpha^{-2}\right)^{1/2} p_j \ e^{-iH_\phi t} \Psi\right\| \\
& = \| k_j |k|^{-3} \chi_{\{|k|>\Lambda\}} \| \left\|p_j \ e^{-iH_\phi t} \left( \mathcal N+\alpha^{-2}\right)^{1/2} \Psi\right\| \\
& \leq C \| k_j |k|^{-3} \chi_{\{|k|>\Lambda\}} \| \left\|(p^2+1)^{1/2} \left( \mathcal N+\alpha^{-2}\right)^{1/2} \Psi\right\| \\
& \leq \sqrt 2\, C \| k_j |k|^{-3} \chi_{\{|k|>\Lambda\}} \| M \alpha^{-1} \,.
\end{align*}
Here we used \eqref{eq:apriori1}. Thus, $f_3$ is bounded as required for Proposition \ref{mainprop}.


\subsection{Decomposition of the integral of $g$}

We want to use the fact that
$$
e^{iH_\phi t}\left[H_\phi, a^*( e^{ikx} |k|^{-3} \chi_{\{|k|>\Lambda\}})\right] e^{-iH_\phi t} = -i \frac{d}{dt} \left( e^{iH_\phi t} a^*( e^{ikx} |k|^{-3} \chi_{\{|k|>\Lambda\}}) e^{-iH_\phi t} \right) \,.
$$
This implies that
\begin{align*}
& \left( e^{iHt}-e^{iH_\phi t}\right) \left[H_\phi, a^*( e^{ikx} |k|^{-3} \chi_{\{|k|>\Lambda\}})\right] e^{-iH_\phi t} \\
& \quad = - i \left( e^{iHt}-e^{iH_\phi t}\right) e^{-iH_\phi t} \frac{d}{dt} \left( e^{iH_\phi t} a^*( e^{ikx} |k|^{-3} \chi_{\{|k|>\Lambda\}}) e^{-iH_\phi t} \right) \\
& \quad = \int_0^t ds\, e^{iHs} (H-H_\phi)e^{-iH_\phi s} \frac{d}{dt} \left( e^{iH_\phi t} a^*( e^{ikx} |k|^{-3} \chi_{\{|k|>\Lambda\}}) e^{-iH_\phi t} \right) \,.
\end{align*}
Integrating by parts, we find that
\begin{align*}
\int_0^T& dt\, g(t) \\ & = 2\im \int_0^Tdt\, \left\langle \int_0^t ds\, e^{iH_\phi s} (H-H_\phi) e^{-iHs}\Psi, \frac{d}{dt} e^{iH_\phi t} a^*( e^{ikx} |k|^{-3} \chi_{\{|k|>\Lambda\}}) e^{-iH_\phi t} \Psi \right\rangle \\
& = - 2\im \int_0^Tdt\, \left\langle e^{iH_\phi t} (H-H_\phi) e^{-iHt}\Psi, e^{iH_\phi t} a^*( e^{ikx} |k|^{-3} \chi_{\{|k|>\Lambda\}}) e^{-iH_\phi t} \Psi \right\rangle \\
& \quad + 2\im \left\langle \int_0^T ds\, e^{iH_\phi s} (H-H_\phi) e^{-iHs}\Psi, e^{iH_\phi T} a^*( e^{ikx} |k|^{-3} \chi_{\{|k|>\Lambda\}}) e^{-iH_\phi T} \Psi \right\rangle \\
& = - 2\im \int_0^Tdt\, \left\langle (H-H_\phi) e^{-iHt}\Psi, \Psi(t) \right\rangle
\end{align*}
with
$$
\Psi(t) = \left( a^*( e^{ikx} |k|^{-3} \chi_{\{|k|>\Lambda\}}) - e^{iH_\phi (T-t)} a^*( e^{ikx} |k|^{-3} \chi_{\{|k|>\Lambda\}}) e^{-iH_\phi (T-t)} \right) e^{-iH_\phi t} \Psi \,.
$$
We decompose again $H=H_\phi +A+B +B^*$ as in \eqref{eq:hdecomp} and accordingly
$$
\int_0^Tdt\, g(t) = G_1(T) + G_2(T) + G_3(T) 
$$
with
\begin{align*}
G_1(T) & = -2\im \int_0^Tdt\, \left\langle A e^{-iHt}\Psi, \Psi(t) \right\rangle \,,\\
G_2(T) & = -2\im \int_0^Tdt\, \left\langle B e^{-iHt}\Psi, \Psi(t) \right\rangle \,,\\
G_3(T) & = -2\im \int_0^Tdt\, \left\langle B^* e^{-iHt}\Psi, \Psi(t) \right\rangle \,.
\end{align*}
It remains to bound these three terms.


\subsection{Bound on $G_1$}

If we write $A=\mathcal N+\tilde A$, we obtain from Lemma \ref{cran} that
$$
\left\|\tilde A\xi\right\| \leq C \left\|\left(\mathcal N+\alpha^{-2}\right)^{1/2} \xi\right\|
\qquad\text{for all}\ \xi \,.
$$
This allows us to bound
\begin{align*}
|G_1(T)| & \leq 2 \int_0^T dt\, \left( \|\mathcal N^{1/2} e^{-iHt}\Psi\| \|\mathcal N^{1/2} \Psi(t)\| + \|e^{-iHt}\Psi\| \|\tilde A \Psi(t)\| \right) \\
& \leq 2 \int_0^T dt\, \left( \|\mathcal N^{1/2} e^{-iHt}\Psi\| \|\mathcal N^{1/2} \Psi(t)\| + C M \left\| \left(\mathcal N+\alpha^{-2}\right)^{1/2} \Psi(t) \right\| \right) \,.
\end{align*}
According to energy conservation, Lemma \ref{energyconsh}, we have
$$
\|\mathcal N^{1/2} e^{-iHt}\Psi\| \leq C \|(p^2 + \mathcal N+1)^{1/2} \Psi\| \leq CM \,.
$$
Thus, it remains to bound the norm of $\Psi(t)$ and $\mathcal N^{1/2}\Psi(t)$. By Lemma \ref{cran},
\begin{align*}
\|\Psi(t)\| & \leq \| a^*( e^{ikx} |k|^{-3} \chi_{\{|k|>\Lambda\}}) e^{-iH_\phi t} \Psi\| + \| a^*( e^{ikx} |k|^{-3} \chi_{\{|k|>\Lambda\}}) e^{-iH_\phi T} \Psi\| \\
& \leq 2 \left\| |k|^{-3} \chi_{\{|k|^{-3}>\Lambda\}}\right\| \left\|\left(\mathcal N+\alpha^{-2}\right)^{1/2} \Psi\right\| \\
& \leq 2 \sqrt2 \left\| |k|^{-3} \chi_{\{|k|^{-3}>\Lambda\}}\right\| M \alpha^{-1} \,,
\end{align*}
where we used \eqref{eq:apriori1}. Moreover, again by Lemma \ref{cran},
\begin{align*}
\|\mathcal N^{1/2} \Psi(t)\|  \leq \; &\| \mathcal N^{1/2} a^*( e^{ikx} |k|^{-3} \chi_{\{|k|>\Lambda\}}) e^{-iH_\phi t} \Psi\| \\ &+ \| \mathcal N^{1/2} a^*( e^{ikx} |k|^{-3} \chi_{\{|k|>\Lambda\}}) e^{-iH_\phi T} \Psi\| \\
 \leq \; &2 \left\| |k|^{-3} \chi_{\{|k|^{-3}>\Lambda\}}\right\| \left\|\left(\mathcal N+\alpha^{-2}\right) \Psi\right\| \\
\leq \; &4 \left\| |k|^{-3} \chi_{\{|k|^{-3}>\Lambda\}}\right\| M \alpha^{-2} \,.
\end{align*}
Note that the previous two bounds also imply that
$$
\left\| \left(\mathcal N+\alpha^{-2}\right)^{1/2} \Psi(t) \right\|
\leq C' M\alpha^{-2} \,.
$$
Combining everything we infer that
$$
|G_1(T)| \leq C'' M \alpha^{-2} T \,,
$$
as required for Proposition \ref{mainprop}.


\subsection{Bound on $G_2$}

Using the second inequality in Lemma \ref{lemmab} we get
\begin{align*}
|G_2(T)| & \leq 2 \int_0^T dt\, \left\|\left(\mathcal N+\alpha^{-2}\right)^{-1/2} Be^{iHt}\Psi\right\| \left\|\left(\mathcal N+\alpha^{-2}\right)^{1/2} \Psi(t) \right\| \\
& \leq 2C \int_0^T dt\, \left\|\left(p^2+1\right)^{1/2} e^{iHt}\Psi\right\| \left\|\left(\mathcal N+\alpha^{-2}\right)^{1/2} \Psi(t) \right\| \,.
\end{align*}
By energy conservation, Lemma \ref{energyconsh}, we have
$$
\left\|\left(p^2+1\right)^{1/2} e^{iHt}\Psi\right\| \leq C \left\|\left(p^2+\mathcal N +1\right)^{1/2} \Psi\right\| \leq CM \,.
$$
This, together with the bound on $\left(\mathcal N+\alpha^{-2}\right)^{1/2} \Psi(t)$ that we derived when bounding $G_1$, yields a bound on $G_2$ of the desired form.


\subsection{Bound on $G_3$}

We bound, using the first inequality in Lemma \ref{lemmab},
\begin{align*}
|G_3(T)| & \leq 2 \int_0^Tdt\, \left\| e^{-iHt}\Psi \right\| \left\| B\Psi(t) \right\| \\
& \leq 2 C \int_0^Tdt\, \left\|\left(p^2+1\right)^{1/2}\mathcal N^{1/2} \Psi(t) \right\| \,. 
\end{align*}
By energy conservation, Lemma \ref{energyconsh0}, together with Corollary \ref{pcr} and the fact that $(1+|k|)|k|^{-3}\chi_{\{|k|>\Lambda\}}\in L^2$,
\begin{align*}
\left\|\left(p^2+1\right)^{1/2} \mathcal N^{1/2} \Psi(t) \right\|
& \leq \left\|\left(p^2+1\right)^{1/2} \mathcal N^{1/2} a^*( e^{ikx} |k|^{-3} \chi_{\{|k|>\Lambda\}}) e^{-iH_\phi t} \Psi \right\| \\
& \qquad + C \left\| \left(p^2+1\right)^{1/2} \mathcal N^{1/2} a^*( e^{ikx} |k|^{-3} \chi_{\{|k|>\Lambda\}}) e^{-iH_\phi T} \Psi \right\| \\
& \leq C' \left\|\left(p^2+1\right)^{1/2} \left(\mathcal N+\alpha^{-2}\right) e^{-iH_\phi t} \Psi \right\| \\
& \qquad + CC' \left\| \left(p^2+1\right)^{1/2} \left(\mathcal N+\alpha^{-2}\right) e^{-iH_\phi T} \Psi \right\| \\
& \leq C'' \left\|\left(p^2+1\right)^{1/2} \left(\mathcal N+\alpha^{-2}\right) \Psi \right\|\\
& \leq 2 C'' M \alpha^{-2} \,.
\end{align*}
Again this shows that $G_3$ is bounded as required for the application of Proposition \ref{mainprop}. The proof of Proposition \ref{mainprop} is now complete.


\appendix

\section{Strong coupling units}

In this appendix we briefly explain how $H^{\text{F}}_\alpha$ is related to the more traditional form of the Fr\"ohlich Hamiltonian
$$
p^2 + \sqrt\alpha \int_{\R^3} \frac{d k}{| k|} \left( e^{-i k\cdot  x}  a_{ k} + e^{i k\cdot  x} a_{ k}^* \right) + \int_{\R^3} dk\, a_k^* a_k \,,
$$
where now $a_k^*$ and $a_k$ satisfy
$$
[a_k,a_{k'}^*]=\delta(k-k') \,,
\qquad [a_k,a_{k'}]=[a_k^*,a_{k'}^*]=0
\qquad\text{for all}\ k,k'\in\R^3 \,.
$$
Let $\tilde x = \alpha x$, so that $\tilde p =\alpha^{-1} p$. Then the above operator is unitarily equivalent to
$$
\alpha^2 \tilde p^2 + \sqrt\alpha \int_{\R^3} \frac{d k}{| k|} \left( e^{-i \alpha^{-1} k\cdot  \tilde x}  a_{ k} + e^{i \alpha^{-1} k\cdot  \tilde x} a_{ k}^* \right) + \int_{\R^3} dk\, a_k^* a_k \,.
$$
By the change of variables $\tilde k = \alpha^{-1} k$ we can rewrite the operator as
\begin{align*}
& \alpha^2 \tilde p^2 + \alpha^{5/2} \int_{\R^3} \frac{d \tilde k}{|\tilde k|} \left( e^{-i \tilde k\cdot  \tilde x}  a_{\alpha \tilde k} + e^{i \tilde k\cdot  \tilde x} a_{\alpha\tilde k}^* \right) + \alpha^3 \int_{\R^3} dk\, a_{\alpha\tilde k}^* a_{\alpha\tilde k} \\
& = \alpha^2 \left( \tilde p^2 + \int_{\R^3} \frac{d \tilde k}{|\tilde k|} \left( e^{-i \tilde k\cdot  \tilde x}  \left(\alpha^{1/2} a_{\alpha \tilde k}\right) + e^{i \tilde k\cdot  \tilde x} \left(\alpha^{1/2} a_{\alpha\tilde k}^*\right) \right) \right. \\ &\hspace{7cm} \left. + \int_{\R^3} dk\, \left( \alpha^{1/2} a_{\alpha\tilde k}\right)^* \left(\alpha^{1/2} a_{\alpha\tilde k} \right) \right) \,.
\end{align*}
Defining $\tilde a_{\tilde k} = \alpha^{1/2} a_{\alpha\tilde k}$ we find the commutation relations
$$
[\tilde a_{\tilde k},\tilde a_{\tilde k'}^*]=\alpha^{-2} \delta(\tilde k-\tilde k') \,,
\qquad [\tilde a_{\tilde k},\tilde a_{\tilde k'}]=[\tilde a_{\tilde k}^*,\tilde a_{\tilde k'}^*]=0
\qquad\text{for all}\ \tilde k,\tilde k'\in\R^3 \,.
$$
Thus, we have obtained the Hamiltonian $\alpha^2 H^{\text{F}}_\alpha$.


\bibliographystyle{amsalpha}

\end{document}